\newtheorem{theorem}{Theorem}[section]
\newtheorem{lemma}[theorem]{Lemma}
\newcommand{\ma}[1]{\mathbf{ #1 }}         
\newcommand{\compl}{\mathbb{C}}        
\newcommand{\real}{\mathbb{R}}         
\newcommand{\MainFigureSizes}{0.85\columnwidth}
\newcommand{\rev}[1]{{\color[rgb]{0,0,0}#1}}
\begin{document}
  
  
  

\title{\rev{Hardware Impairments Aware Transceiver Design for Bidirectional Full-Duplex MIMO OFDM Systems}}
		\author{Omid~Taghizadeh,~Vimal~Radhakrishnan,~Ali~Cagatay~Cirik,~\IEEEmembership{Member,~IEEE},~Rudolf Mathar,~\IEEEmembership{Senior~Member,~IEEE},~Lutz~Lampe~\IEEEmembership{Senior~Member,~IEEE} \vspace{-5mm}
			\IEEEcompsocitemizethanks{
		\IEEEcompsocthanksitem O. Taghizadeh,~V.~Radhakrishnan, and R. Mathar are with the Institute for Theoretical Information Technology, RWTH Aachen University, Aachen, 52074, Germany (email:  \{taghizadeh,~radhakrishnan,~mathar\}@ti.rwth-aachen.de).
					\IEEEcompsocthanksitem A.~C.~Cirik and L. Lampe are with the Department of Electrical and Computer Engineering, University of British Columbia, Vancouver, BC V6T 1Z4, Canada  (email: \{cirik,~lampe\}@ece.ubc.ca). 
					\IEEEcompsocthanksitem Part of this paper has been presented at the 28th IEEE Annual International Symposium on Personal, Indoor, and Mobile Radio Communications (PIMRC'17)~\cite{Tagh1710:Power}. 
							}}  
\maketitle

\begin{abstract}
In this paper we address the linear precoding and decoding design problem for a bidirectional orthogonal frequency-division multiplexing (OFDM) communication system, between two multiple-input multiple-output (MIMO) full-duplex (FD) nodes. \rev{The effects of hardware distortion as well as the channel state information error are taken into account. In the first step, we transform the available time-domain characterization of the hardware distortions for FD MIMO transceivers to the frequency domain, via a linear Fourier transformation. As a result, the explicit impact of hardware inaccuracies on the residual self-interference (RSI) and inter-carrier leakage (ICL) is formulated in relation to the intended transmit/received signals. Afterwards, linear precoding and decoding designs are proposed to enhance the system performance following the minimum-mean-squared-error (MMSE) and sum rate maximization strategies, assuming the availability of perfect or erroneous CSI. The proposed designs are based on the application of alternating optimization over the system parameters, leading to a necessary convergence. Numerical results indicate that the application of a distortion-aware design is essential for a system with a high hardware distortion, or for a system with a low thermal noise variance.}
\end{abstract}

\begin{keywords}
\rev{Full-duplex, MIMO, OFDM, hardware impairments, MMSE.}
\end{keywords}

\IEEEpeerreviewmaketitle
\vspace{-4mm}
\section{Introduction} \label{sec:into}

\IEEEPARstart{F}{ULL}-Duplex (FD) transceivers are known for their capability to transmit and receive at the same time and frequency, and hence have the potential to enhance the spectral efficiency~\cite{HBCJMKL:14}. Nevertheless, such systems suffer from the inherent self-interference \rev{(SI)} from their own transmitter. Recently, specialized self-interference cancellation (SIC) techniques, e.g., \cite{Bharadia:14, BMK:13, lee2016analog, 6517516}, have demonstrated an adequate level of isolation between transmit (Tx) and receive (Rx) directions to facilitate an FD communication and motivated a wide range of related studies, see, e.g., \cite{6719531, 6963403, 6656015, 7015632}. A common idea of such SIC techniques is to subtract the dominant part of the \rev{SI} signal, e.g., a line-of-sight (LOS) \rev{SI} path or near-end reflections, in the radio frequency (RF) analog domain so that the remaining signal can be further processed in the baseband, i.e., digital domain. Nevertheless, such methods are still far from perfect in a realistic environment mainly due to \emph{i)} aging and inherent inaccuracy of the hardware (analog) elements, as well as \emph{ii)} inaccurate channel state information (CSI) in the \rev{SI} path, due to noise and limited channel coherence time. In this regard, inaccuracy of the analog hardware elements used in subtracting the dominant \rev{SI} path in RF domain may result in severe degradation of SIC quality. This issue becomes more relevant in a realistic scenario, where unlike the demonstrated setups in the lab environment, analog components are prone to aging, temperature fluctuations, and occasional physical damage. Moreover, an FD link is vulnerable to CSI inaccuracy at the \rev{SI} path in environments with a small channel coherence time, see~\cite[Subsection~3.4.1]{JCKB:11}. A good example of such challenge is a high-speed vehicle that passes close to an FD device, and results in additional reflective \rev{SI} paths\footnote{Since the object is moving rapidly, the reflective paths are more difficult to be accurately estimated.}.            
                      
In order to combat the aforementioned issues, an FD transceiver may adapt its transmit/receive strategy to the expected nature of CSI inaccuracy, e.g., by directing the transmit beams away from the moving objects or operating in the directions with smaller impact of CSI error. Moreover, the accuracy of the transmit/receiver chain elements can be considered, e.g., by dedicating less power, or ignoring the chains with noisier elements in the signal processing. In this regard, a widely used model for the operation of a multiple-antenna FD transceiver is proposed in \cite{DMBS:12}, assuming a single carrier communication system, where CSI inaccuracy as well as the impact of hardware impairments are taken into account. A gradient-projection-based method is then proposed in the same work for maximizing the sum rate in an FD bidirectional setup. Building upon the proposed benchmark, a convex optimization design framework is introduced in \cite{Huberman2014, JTLH:12} by defining a price/threshold for the \rev{SI} power, assuming the availability of perfect CSI and accurate transceiver operation. While this approach reduces the design computational complexity, it does not provide a reliable performance for a scenario with erroneous CSI, particularly regarding the \rev{SI} path \cite{ZTH:132}. Consequently, the consideration of CSI and transceiver error in an FD bidirectional system is further studied in \cite{CRYL:15, Tagh1705:Worst} by maximizing the system sum rate, in \cite{6941786} by minimizing the sum mean-squared-error (MSE), and in \cite{7562572, Tagh1705:Sum} for minimizing the system power consumption under a required quality of service. \par 

The aforementioned works focus on modeling and design methodologies for single-carrier FD bidirectional systems, under frequency-flat channel assumptions.
In this regard, the importance of extending the previous designs for a multi-carrier (MC) system with a frequency selective channel is threefold.
Firstly, due to the increasing rate demand of the wireless services, and following the same rationale for the promotion of FD systems, the usage of larger bandwidths becomes necessary. This, in turn, invalidates the usual frequency-flat assumption and calls for updated design methodologies. 
Secondly, unlike the half-duplex (HD) systems where the operation of different subcarriers can be safely separated in the digital domain, an FD system is highly prone to the inter-carrier leakage (ICL) due to the impact of hardware distortions on the strong \rev{SI} channel\footnote{For instance, a high-power transmission in one of the subcarriers will result in a higher \rev{residual self-interference (RSI)} in all of the sub-channels due to, e.g., a higher quantization and power amplifier noise levels.}. This, in particular, calls for a proper modeling of the ICL as a result of non-linear hardware distortions for FD transceivers. \rev{And finally, the channel frequency selectivity shall be opportunistically exploited, by means of a joint design of the linear transmit and receive strategies at all subcarriers, in order to enhance the system performance.}   

\vspace{-4mm} \subsection{Related works on FD MC systems} \label{related_works}
In the early work by Riihonen \emph{et al.} \cite{6488955}, the performance of a combined analog/digital SIC scheme is evaluated for an FD orthogonal-frequency-division-multiplexing (OFDM) transceiver, taking into account the impact of hardware distortions, e.g., limited analog-to-digital convertor (ADC) accuracy. The problem of resource allocation and performance analysis for FD MC communication systems is then addressed in \cite{5449862,7270330,7504451,sun2016optimal,7454410,7194031,6832469}, however, assuming a single antenna transceiver.  
%
Specifically, an FD MC system is studied in \cite{5449862,7270330,7504451} in the context of FD relaying, in \cite{7454410, 7194031} and \cite{sun2016optimal} in the context of FD cellular systems with non-orthogonal multiple access (NOMA) capability, and in \cite{6832469} for rate region analysis of a hybrid HD/FD link. Moreover, an MC relaying system with hybrid decode/amplify-and-forward operation is studied in \cite{ng2012dynamic}, with the goal of maximizing the system sum rate via scheduling and resource allocation. However, in all of the aforementioned designs, the behavior of the residual \rev{SI} signal is modeled as a purely linear system. As a result, the impacts of the hardware distortions leading to ICL, as observed in \cite{6488955}, are neglected.  
\vspace{-4mm} \subsection{Contribution and paper organization}
In this paper we study a bidirectional FD MIMO OFDM system\footnote{\rev{The modeling and the obtained design frameworks can be applied also for any multi-carrier system with orthogonal waveforms, i.e., with zero intrinsic interference.}}, where the impacts of hardware distortions leading to imperfect SIC and ICL are taken into account. 
Our main contributions, together with the paper organization are summarized as follows:
\rev{\begin{itemize}[leftmargin=*]
\item In the seminal work by Day \emph{et al.} \cite{DMBS:12}, an FD MIMO transceiver is modeled considering the impacts of hardware distortions in transmit/receiver chains, which is then extensively used for the purpose of FD system design and performance analysis, e.g., \cite{DMBSR:12,CRYL:15, 8036658, XaZXMaXu:15, 7809157, 7562572, 8234646}. In the first step, we extend the available time-domain characterization of hardware distortions into an FD MIMO OFDM setup via a linear discrete Fourier transformation. The obtained frequency-domain characterization reveals the statistics of the RSI and ICL, in relation to the intended transmit/receive signals at each subcarrier. Please note that this is in contrast to the available prior works on FD MC systems \cite{5449862,7270330,7504451,sun2016optimal,7454410,7194031,6832469,ng2012dynamic}, where ICL is neglected and \rev{RSI} signal is modeled via a purely linear system.


\item Building on the obtained characterization, linear transmit/receive strategies are proposed in order to enhance the system performance. In Section~\ref{sec:WMMSE}, an alternating quadratic convex program (QCP), denoted as AltQCP, is proposed in order to obtain a minimum weighted MSE transceiver design. The known weighted-minimum-MSE (WMMSE) method \cite{CACC:08} is then utilized to extend the AltQCP framework for maximizing the system sum rate. For both algorithms, a monotonic performance improvement is observed at each step, leading to a necessary convergence.   

\item In Section~IV, we extend the studied system to an asymmetric OFDM FD bidirectional setup, where an FD transceiver with a large antenna array simultaneously communicates with multiple single-antenna FD transceivers. The extended scenario is particularly relevant, both due to the recent advances in building FD massive MIMO transceivers \cite{7575689} as well as the signified impact of hardware distortions due to the lower per-element cost (e.g., low resolution quantization~\cite{7932472}). An algorithm for joint power and subcarrier allocation is then proposed, following the successive inner approximation (SIA) framework \cite{marks1978technical}, with a guaranteed convergence to a solution satisfying Karush–Kuhn–Tucker (KKT) conditions. 

\item In Section~\ref{sec:WMMSE_CSI_Error} the proposed design in Section~\ref{sec:WMMSE} is extended by also taking into account the impact of CSI error. In particular, a worst-case MMSE design is proposed as an alternating semi definite program (SDP), denoted as AltSDP. Similar to the previous methods, a monotonic performance improvement is observed at each step, leading to a necessary convergence. Moreover, a methodology to obtain the most destructive CSI error matrices is proposed. This is done by converting the resulting non-convex quadratic problem into a convex program, in order to facilitate worst-case performance analysis under CSI error.     
\end{itemize}   }
Numerical simulations show that the application of a distortion-aware design is essential, as transceiver accuracy degrades, and ICL becomes a dominant factor. 
\vspace{-4mm} \subsection{Mathematical Notation} 
Throughout this paper, column vectors and matrices are denoted as lower-case and {upper-case} bold letters, respectively. {Mathematical expectation, trace}, inverse, determinant, transpose, conjugate {and} Hermitian transpose are denoted by $ \mathbb{E}\{\cdot\}, \; {\text{tr}}(\cdot), \; (\cdot)^{-1}\; |\cdot|, \; (\cdot)^{ T},\; (\cdot)^{*}$ {and} $(\cdot)^{ H},$ respectively. The Kronecker product is denoted by $\otimes$. The identity matrix with dimension $K$ is denoted as ${\ma I}_K$ and ${\text{vec}}(\cdot)$ operator stacks the elements of a matrix into a vector. $\ma{0}_{m \times n}$ represents an all-zero matrix with size $m \times n$. $\| \cdot \|_{2}$ and $\|\cdot\|_{{F}}$ {respectively represent the Euclidean and Frobenius norms}. $\text{diag}(\cdot)$ returns a diagonal matrix by putting the off-diagonal elements to zero. $\left\lfloor \mathbf{A}_i \right \rfloor_{i=1,\ldots,K}$ denotes a tall matrix, obtained by stacking the matrices $\mathbf{A}_i,~i=1,\ldots, K$. $\mathcal{R}\{\ma{A}\}$ represents the range (column space) of the matrix $\ma{A}$. The set $\mathbb{F}_K$ is defined as $\{1,\ldots,K\}$. The set of real, positive real, and complex numbers are respectively denoted as $\mathbb{R}, \mathbb{R}^+ , \compl$. 

\vspace{-4mm}
\section{System Model}\label{sec:model}

{{\footnotesize{\begin{table}[!t]  
	\renewcommand{\arraystretch}{1}
  \caption{\rev{Used symbols}}  \vspace{-2mm} \label{tab_notations}
  \centering
  \rev{\begin{tabular}[t]{|c||l|}
   \hline	 
	 $k,i,l$    &  index of a subcarrier, communication direction, \\
	   &  and a transmit/receive chain  \\
	 $ \mathbb{I}, \mathbb{V}, \mathbb{U} $   & set of comm. directions, precoder and decoder matrices   	\\
	 $N_i, M_i, d_i$  &  number of transmit and receive antennas and data streams\\
	 ${\ma{s}}_{i}^k (\tilde{\ma{s}}_{i}^k)$  &  transmitted (estimated) data symbol\\
	 ${\ma{U}_{i}}^k ({\ma{V}_{i}}^k)$  &  linear decoder (precoder) matrix \\
	 ${\ma{y}}_{i}^k (\tilde{\ma{y}}_{i}^k)$  &  received signal before (after) SI cancellation\\
	${\ma{H}}_{ij}^{k}, \tilde{\ma{H}}_{ij}^{k} $  & the exact, and estimated CSI matrix \\
	 ${\ma{\Delta}}_{ij}^{k}, \mathbb{D}_{ij}^{k} $  &  CSI error, and the set of feasible CSI errors\\	
	 $\zeta_{ij}^k, \ma{D}_{ij}^k $  &  radius and shaping matrix for the feasible CSI error region  \\
	 $\ma{e}_{\text{r}, i}^k (\ma{e}_{\text{t}, i}^k)$  & receiver (transmitter) distortion over the subcarrier $k$ \\
	 $\ma{\Theta}_{\text{rx},i} (\ma{\Theta}_{\text{trx},i})$  & diagonal matrix of receive (transmit) distortion coefficients \\
	 $\boldsymbol{\nu}_{i}^k, \ma{\Sigma}_{i}^k$  & collective residual SI plus noise signal, and its covariance \\
	 $\ma{n}_{i}^k, \sigma_{i,k}^2$  &  additive thermal noise and its variance\\
	 $\ma{u}_{i}^k (\ma{v}_{i}^k)$  & undistorted received (transmitted) signal\\
	 $\ma{x}_{i}^{k}, P_i$  &  transmit signal, and the maximum transmit power\\
	\hline
  \end{tabular}}  \vspace{-2mm} 
\end{table} }}
A bidirectional {OFDM} communication between two {MIMO} {FD} transceivers is considered. Each communication direction is associated with $N_i$ transmit and $M_i$ receive antennas, where $i \in \mathbb{I}$, and $\mathbb{I}:=\{1,2\}$ represents the set of the communication directions. The desired channel in the communication direction $i$ and subcarrier $k \in \mathbb{F}_K$ is denoted as $\ma{H}_{ii}^{k} \in \compl^{M_i \times N_i}$ where $K$ is the number of subcarriers. The interference channel from $i$ to $j$-th communication direction is denoted as $\ma{H}_{ji}^{k}\in \compl^{M_j \times N_i}$. All channels are quasi-static\footnote{It indicates that the channel is constant in each communication frame, but may vary from one frame to another frame.}, and frequency-flat in each subcarrier. 

The transmitted signal in the direction $i$, subcarrier $k$ is formulated as 
\begin{align} \label{model:transmitsignal}
\ma{x}_{i}^{k} = \underbrace{\ma{V}_{i}^{k} \ma{s}_{i}^{k}}_{=:\ma{v}_{i}^{k}} + \ma{e}_{\text{t}, i}^{k} , \;\; \sum_{k\in \mathbb{F}_K} \mathbb{E} \left\{ \|\ma{x}_{i}^{k}\|_2^2 \right\} \leq P_i,  
\end{align}     
where $\ma{s}_{i}^{k} \in \compl^{d_i}$ and $\ma{V}_{i}^{k} \in \compl^{N_i \times d_i}$ respectively represent the vector of the data symbols and the transmit precoding matrix, and $P_i \in \real^+$ imposes the maximum affordable transmit power constraint. The number of the data streams in each subcarrier and in direction $i$ is denoted as $d_i$, and $\mathbb{E}\{\ma{s}_{i}^{k}{\ma{s}_{i}^{k}}^H\} = \ma{I}_{d_i}$. Moreover, $\ma{v}_{i}^{k} \in \compl^{N_i}$ represents the desired signal to be transmitted, where $\ma{e}_{\text{t},i}^{k}$ models the inaccurate behavior of the transmit chain elements, i.e, transmit distortion, see Subsection~\ref{distortion_signal_explaination} for more details. \par 
The received signal at the destination can be consequently written as 
\begin{align} \label{model_received_beforecancellation}
\ma{y}_{i}^k = \underbrace{ \ma{H}_{ii}^k \ma{x}_{i}^k +  \ma{H}_{ij}^k \ma{x}_{j}^k + \ma{n}_{i}^k}_{=:\ma{u}_{i}^k} + \ma{e}_{\text{r}, i}^k,
\end{align}  
where $\ma{n}_{i}^k \sim \mathcal{CN}\left( \ma{0}_{M_i},  \sigma_{i,k}^2 \ma{I}_{M_i} \right)$ is the additive thermal noise. Similar to the transmit signal model, $\ma{e}_{\text{r}, i}^k$ represents the receiver distortion and models the inaccuracies of the receive chain elements.
The \emph{known}, i.e., distortion-free, part of the {SI} is then subtracted from the received signal, employing an {SIC} scheme. This is formulated as  
\begin{align} \label{model:rx_signal}
\tilde{\ma{y}}_{i}^k :&= {\ma{y}}_{i}^k - {\ma{H}}_{ij}^k \ma{V}_{j}^k\ma{s}_{j}^k   = \ma{H}_{ii}^k \ma{V}_{i}^k \ma{s}_{i}^k + {\boldsymbol{\nu}}_{i}^k, 
\end{align} 
where $\tilde{\ma{y}}_{i}^k$ is the received signal in direction $i$ and subcarrier $k$, after {SIC}. Moreover, the aggregate interference-plus-noise term is denoted as $\boldsymbol{\nu}_{i}^k \in \compl^{M_i}$, where 
\begin{align} \label{model:ri_signal}
{\boldsymbol{\nu}}_{i}^k & = \ma{H}_{ij}^k \ma{e}_{\text{t},j}^k + \ma{H}_{ii}^k \ma{e}_{\text{t},i}^k + \ma{e}_{\text{r},i}^k  + \ma{n}_{i}^k, \;\; j \neq i.
\end{align}
Finally, the estimated data vector is obtained at the receiver as 
\begin{align} \label{model:ri_estimatedsignal}
\tilde{\ma{s}}_{i}^k = \left( {\ma{U}_{i}}^k \right)^H \tilde{\ma{y}}_{i}^k, 
\end{align}
where $\ma{U}_{i}^k \in \compl^{M_i \times d_i }$ is the linear receive filter. 
\vspace{-2mm} 
\subsection{Limited dynamic range in an FD OFDM system} \label{distortion_signal_explaination}
\rev{In the seminal work by Day \emph{et al.} \cite{DMBS:12}, a model for the operation of an {FD} {MIMO} transceiver is given, relying on the experimental results on the impact of hardware distortions \cite{MITRX:05, MITTX:98,MITTX:08, FD_ExperimentDrivenCharact}.
In this regard, the inaccuracy of the transmit chain elements, e.g., {DAC} error, {PA} and oscillator phase noise, are jointly modeled for each antenna as an additive distortion, and written as $x_l (t) = v_l (t) + {e}_{\text{t},l} (t) $, see Fig.~\ref{TransceiverAccuracyModel}, such that 
\begin{align} 
& {e}_{\text{t},l} (t) \sim \mathcal{CN} \Big( 0, {\kappa}_{l} \mathbb{E} \big\{ \left| v_l (t) \right|^2 \big\} \Big), \label{eq_model_distortion_stat_1} \\
& {e}_{\text{t},l} (t) \bot {v}_{l} (t),\;\; {e}_{\text{t},l} (t) \bot {e}_{\text{t},l^{'}} (t), \;\; {e}_{\text{t},l} (t) \bot {e}_{\text{t},l} (t^{'}), \nonumber \\ & \;\;\;\;\;\;\;\;\;\;\;\;\;\;\;\;\;\;\;\;\;\;\;\;\;\;\;\;\;\;\;\;\;\;\;\;\;\;\;\;\;\;\;\;\;\;\; l\neq {l^{'}} \in \mathbb{L}_T, \;\; t\neq {t^{'}}, \label{eq_model_distortion_stat_2}
\end{align}
\rev{please see \cite[Section~II.~B,C]{DMBS:12},~\cite[Section~II.~C,D]{DMBSR:12}, \cite{CRYL:15, 8036658, XaZXMaXu:15, 7809157, 7562572} for a similar distortion characterization for FD transceivers\footnote{\rev{It is worth mentioning that the accuracy of the above-mentioned modeling varies for different implementations of FD transceivers, depending on the complexity and the used SIC method. In this regard, the statistical independence of distortion elements defined in (iii) and (iv) also hold for an advanced implementation of an FD transceiver, assuming a high signal processing capability. This is since any correlation structure in the distortion signal can be exploited and removed in order to reduce the \rev{RSI} via advanced signal processing, see \cite[Subsection~3.2]{BMK:13}. However, the linear dependence of the remaining distortion signal variance to the desired signal strength varies for different SIC implementations, and should be estimated separately for each transceiver.}}. In the above arguments, $t$ denotes the instance of time, and $v_l$, $x_l$, and $e_{\text{t},l} \in \compl$ are respectively the baseband time-domain representation of}} the intended transmit signal, the actual transmit signal, and the additive transmit distortion at the $l$-th transmit chain. The set $\mathbb{L}_T$ represents the set of all transmit chains. Moreover, ${\kappa}_{l} \in \real^+$ represents the distortion coefficient for the $l$-th transmit chain, relating the collective power of the distortion signal, over the active spectrum, to the intended transmit power. \par
In the receiver side, the combined effects of the inaccurate hardware elements, i.e., {ADC} error, {AGC} and oscillator phase noise, are presented as additive distortion terms and written as $y_l (t) =  u_l (t) + {e}_{\text{r},l} (t) $ such that 
\begin{align} 
& {e}_{\text{r},l} (t) \sim \mathcal{CN} \Big( 0, {\beta}_{l} \mathbb{E} \big\{  \left| u_l (t) \right|^2 \big\} \Big), \label{eq_model_distortion_stat_3}\\
& {e}_{\text{r},l} (t) \bot {u}_{l} (t),\;\; {e}_{\text{r},l} (t) \bot {{e}_{\text{r},l^{'}}} (t), \;\; {e}_{\text{r},l} (t) \bot {e}_{\text{r},l} (t^{'}), \nonumber \\ & \;\;\;\;\;\;\;\;\;\;\;\;\;\;\;\;\;\;\;\;\;\;\;\;\;\;\;\;\;\;\;\;\;\;\;\;\;\;\;\;\;\;\;\;\;  l\neq {l^{'}} \in \mathbb{L}_R, \;\; t \neq {t^{'}}, \label{eq_model_distortion_stat_4}
\end{align}
where $u_l$, $e_{\text{r},l}$, and $y_l \in \compl$ {are respectively the baseband representation of} the intended (distortion-free) received signal, additive receive distortion, and the received signal from the $l$-th receive antenna. The set $\mathbb{L}_R$ represents the set of all receive chains. Similar to the transmit chain characterization, ${\beta}_{l}  \in \real^+$ is the distortion coefficient for the $l$-th receive chain, see Fig.~\ref{TransceiverAccuracyModel}. \rev{For each communication block, the frequency domain representation of the sampled time domain signal is obtained as
\begin{align} \label{FD_P2P_FrequencyDomain_Tx}
& x_l^k = \frac{1}{\sqrt{K}} \sum_{m = 0}^{K-1} x_l (m T_{\text{s}}) e^{-\frac{j2 \pi mk}{K}}  =   \nonumber \\ & \underbrace{\frac{1}{\sqrt{K}} \sum_{m = 0}^{K-1} v_l (m T_{\text{s}}) e^{- \frac{j2 \pi mk}{K}} }_{=:v_l^k} + \underbrace{\frac{1}{\sqrt{K}} \sum_{m = 0}^{K-1} e_{\text{t},l} (m T_{\text{s}}) e^{- \frac{j2 \pi mk}{K}} , }_{=:e_{\text{t},l}^k} 
\end{align} \hspace{-0mm}
and 
\begin{align} \label{FD_P2P_FrequencyDomain_Rx}
& y_l^k = \frac{1}{\sqrt{K}} \sum_{m = 0}^{K-1} y_l (m T_{\text{s}}) e^{-\frac{j 2 \pi m k}{K}}  = \nonumber \\ & \underbrace{\frac{1}{\sqrt{K}} \sum_{m = 0}^{K-1} u_l (m T_{\text{s}}) e^{- \frac{j2 \pi mk}{K}} }_{=:u_l^k} + \underbrace{\frac{1}{\sqrt{K}} \sum_{m = 0}^{K-1} e_{\text{r},l} (m T_{\text{s}}) e^{- \frac{j2 \pi mk}{K}} , }_{=:e_{\text{r},l}^k}
\end{align} 
where $T_{\text{s}}$ is the sampling time, and $K T_{\text{s}}$ is the OFDM block duration prior to the cyclic extension, see \cite{cho2010mimo} for a detailed discussion on OFDM technology.  
\begin{lemma} \label{OFDM_Distortion}
The impact of hardware distortions in the frequency domain is characterized as  
\begin{align}\hspace{-0mm}
& {e}_{\text{t},l}^k  \sim \mathcal{CN} \left( 0, \frac{{\kappa}_{l}}{K} \sum_{m \in \mathbb{F}_K} \mathbb{E} \left\{ \left| v_l^m \right|^2 \right\} \right),\;\; e_{\text{t},l}^k \bot v_l^k, \;\; e^k_{\text{t},l} \bot {e^k_{\text{t},{l{}'}}},  \\
& {e}_{\text{r},l}^k  \sim \mathcal{CN} \left( 0, \frac{{\beta}_{l}}{K} \sum_{m \in \mathbb{F}_K} \mathbb{E} \left\{ \left| u_l^m \right|^2 \right\} \right),\;\; e_{\text{r},l}^k \bot u_l^k, \;\; e^k_{\text{r},l} \bot {e^k_{\text{r},{l{}'}}}, 
\end{align} 
transforming the statistical independence, as well as the proportional variance properties from the time domain. 
\end{lemma}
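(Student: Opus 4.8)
The plan is to view each frequency-domain distortion as a unitary (normalized-DFT) linear image of its time-domain samples and to push the three time-domain properties through this linear map. First I would note that $e_{\text{t},l}^k$ defined in \eqref{FD_P2P_FrequencyDomain_Tx} is a deterministic linear combination $\frac{1}{\sqrt{K}}\sum_{m} e_{\text{t},l}(mT_{\text{s}})\, e^{-j2\pi mk/K}$ of the zero-mean circularly-symmetric complex Gaussian samples $e_{\text{t},l}(mT_{\text{s}})$ from \eqref{eq_model_distortion_stat_1}. Since any linear combination of jointly Gaussian variables is again Gaussian, $e_{\text{t},l}^k$ is itself zero-mean circularly-symmetric complex Gaussian, and the same argument applies verbatim to $e_{\text{r},l}^k$ through \eqref{FD_P2P_FrequencyDomain_Rx} and \eqref{eq_model_distortion_stat_3}. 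This settles the $\mathcal{CN}(0,\cdot)$ form and leaves only the variance and the two orthogonality claims.

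For the variance I would expand the second moment as the double sum
\begin{align}
\mathbb{E}\big\{|e_{\text{t},l}^k|^2\big\} = \frac{1}{K}\sum_{m,m'=0}^{K-1} \mathbb{E}\big\{ e_{\text{t},l}(mT_{\text{s}})\, e_{\text{t},l}^*(m'T_{\text{s}}) \big\}\, e^{-\frac{j2\pi(m-m')k}{K}},
\end{align}
and then invoke the temporal-whiteness property $e_{\text{t},l}(t)\bot e_{\text{t},l}(t')$, $t\neq t'$, of \eqref{eq_model_distortion_stat_2} to annihilate every $m\neq m'$ term. On the surviving diagonal the phase factors are unity, so the $k$-dependence disappears and only $\frac{1}{K}\sum_m \mathbb{E}\{|e_{\text{t},l}(mT_{\text{s}})|^2\}$ remains. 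Substituting the proportional variance from \eqref{eq_model_distortion_stat_1} gives $\frac{\kappa_l}{K}\sum_m \mathbb{E}\{|v_l(mT_{\text{s}})|^2\}$, and because the normalized DFT is unitary, Parseval's identity yields $\sum_m |v_l(mT_{\text{s}})|^2 = \sum_m |v_l^m|^2$ path-wise, hence $\sum_m \mathbb{E}\{|v_l(mT_{\text{s}})|^2\} = \sum_m \mathbb{E}\{|v_l^m|^2\}$ after taking expectations. This reproduces exactly the claimed variance; the receive case follows identically under $\kappa_l\mapsto\beta_l$ and $v\mapsto u$, using \eqref{eq_model_distortion_stat_3}.

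For the two orthogonality relations I would write $\mathbb{E}\{e_{\text{t},l}^k (v_l^k)^*\}$ and $\mathbb{E}\{e_{\text{t},l}^k (e_{\text{t},l'}^k)^*\}$ as analogous phase-weighted double sums of the time-domain cross-correlations $\mathbb{E}\{e_{\text{t},l}(mT_{\text{s}}) v_l^*(m'T_{\text{s}})\}$ and $\mathbb{E}\{e_{\text{t},l}(mT_{\text{s}}) e_{\text{t},l'}^*(m'T_{\text{s}})\}$, and argue that every such term vanishes. The main obstacle is precisely here: the relations stated in \eqref{eq_model_distortion_stat_2} only assert orthogonality at \emph{equal} time instants (together with whiteness within a chain), whereas the double sums also generate unequal-time cross terms with $m\neq m'$. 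The clean way to close the argument is to exploit the structural content of the model rather than the bare per-instant statements — namely that the distortion has \emph{zero conditional mean given the desired-signal process}, which forces $\mathbb{E}\{e_{\text{t},l}(mT_{\text{s}}) v_l^*(m'T_{\text{s}})\}=0$ for all $m,m'$ by iterated expectation, and that the per-chain distortion processes are \emph{mutually independent} and zero-mean, which forces $\mathbb{E}\{e_{\text{t},l}(mT_{\text{s}}) e_{\text{t},l'}^*(m'T_{\text{s}})\}=0$ for $l\neq l'$ and all $m,m'$. Under this (standard) reading of the $\bot$ symbol as full statistical independence, all off-diagonal as well as all cross terms drop and the frequency-domain orthogonalities follow; I would flag this interpretation explicitly, since with only literal equal-time orthogonality one can guarantee cancellation of the diagonal terms alone.
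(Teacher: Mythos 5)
Your proof is correct and follows essentially the same route as the paper's appendix: Gaussianity and zero mean from linearity of the normalized DFT, the variance via the double-sum expansion whose off-diagonal terms are annihilated by temporal whiteness, followed by substitution of the proportional-variance property and Parseval's identity on the orthonormal Fourier basis, with the orthogonality relations pushed through the linear map. Your added observation that the literal equal-time statements in the time-domain model only cancel the diagonal ($m=m'$) cross terms, so that the process-level independence reading of $\bot$ is needed to kill the unequal-time terms, is a genuine subtlety that the paper's one-line "directly follows" glosses over, and the interpretation you flag is indeed the one intended by the underlying model.
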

\begin{proof}
See the Appendix.
\end{proof}

\par
The above lemma indicates that the distortion signal variance at each subcarrier, relates to the total distortion power at the corresponding chain, indicating the impact of ICL. This can be interpreted as a variance-dependent thermal noise, where the temporal independence of signal samples results in a flat power spectral density over the active communication bandwidth. In this part we consider a general framework where the transmit (receive) distortion coefficients are not necessarily identical for all transmit (receive) chains belonging to the same transceiver, i.e., different chains may hold different accuracy due to occasional damage and aging. This assumption is relevant in practice since it enables the design algorithms to reduce communication task, e.g., transmit power, on the chains with noisier elements. Following Lemma~\ref{OFDM_Distortion}, the statistics of the distortion terms, introduced in (\ref{model:transmitsignal}), (\ref{model_received_beforecancellation}) can be inferred as    
\begin{align} 
& \ma{e}_{\text{t},i}^k \sim \mathcal{CN} \left( \ma{0}_{N_i}, \ma{\Theta}_{\text{tx},i}  \sum_{k \in \mathbb{F}_K} \text{diag} \left( \mathbb{E} \left\{  \ma{v}_i^k {\ma{v}_i^k}^H \right \} \right) \right), \label{eq_model_distortion_stat_res_tx}  \\  
& \ma{e}_{\text{r},i}^k \sim \mathcal{CN} \left( \ma{0}_{M_i},   \ma{\Theta}_{\text{rx},i} \sum_{k \in \mathbb{F}_K}  \text{diag} \left( \mathbb{E} \left\{  \ma{u}_i^k {\ma{u}_i^k}^H \right \} \right) \right),  \label{eq_model_distortion_stat_res_rx} 
\end{align}
where $\ma{\Theta}_{\text{tx},i} \in \real^{N_i \times N_i}$ ($\ma{\Theta}_{\text{rx},i} \in \real^{M_i \times M_i}$) is a diagonal matrix including distortion coefficients ${\kappa}_{l}/K$ (${\beta}_{l}/K$) for the corresponding chains}\footnote{A simpler mathematical presentation can be obtained by assuming the same transceiver accuracy over all antennas, similar to \cite{DMBS:12, DMBSR:12}. In such a case, the defined diagonal matrices can be replaced by a scalar.}.  
\begin{figure}[!t] 
    \begin{center}
        \includegraphics[angle=0, width=0.85\columnwidth]{./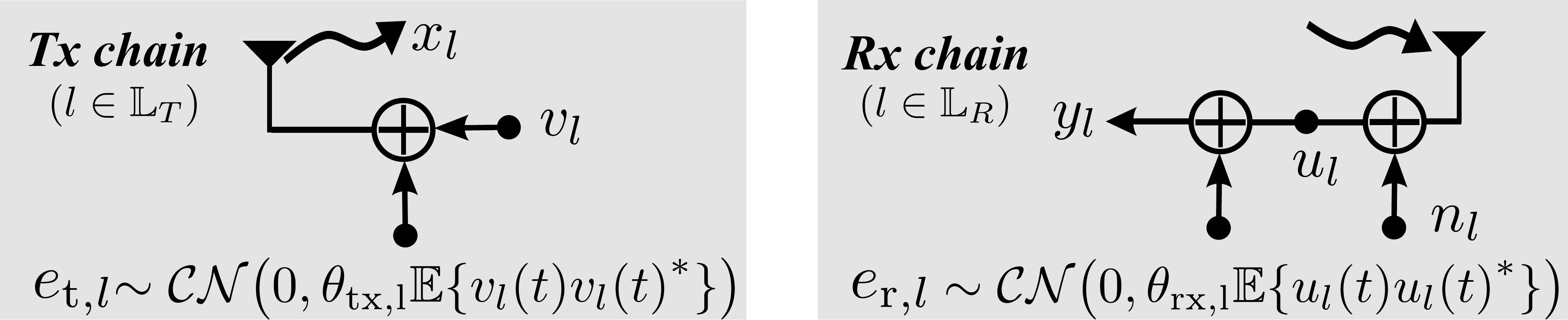}
    \caption{{{Limited dynamic range is modeled by injecting additive distortion terms at each transmit or receive chain. $e_{\text{t},l}$ and $e_{\text{r},l}$ denote the distortion terms, and $n_l$ \rev{represents} the additive thermal noise.}}} \label{TransceiverAccuracyModel} \vspace{-7mm} 
    \end{center}  
\end{figure} 

Via the application of (\ref{eq_model_distortion_stat_res_tx})-(\ref{eq_model_distortion_stat_res_rx}) on (\ref{model:ri_signal}), the covariance of the received collective interference-plus-noise signal is obtained as
{\small{\begin{align} \label{eq_model_aggregate_interference_covariance_CSI_Perfect}
& \ma{\Sigma}_{i}^k  := \mathbb{E} \left\{ {\boldsymbol{\nu}}_{i}^k  {{\boldsymbol{\nu}}_{i}^k}^H \right\} \nonumber \\ 
 & \approx \sum_{j \in \mathbb{I}} \ma{H}_{ij}^k \ma{\Theta}_{\text{tx},j} \text{diag} \left( \sum_{l \in \mathbb{F}_K } \ma{V}_j^l{\ma{V}_j^l}^H \right) {\ma{H}_{ij}^k}^H  + \sigma_{i,k}^2 \ma{I}_{M_i} \nonumber \\
& + \ma{\Theta}_{\text{rx},i}   \text{diag} \bigg( \sum_{l\in\mathbb{F}_K}  \bigg(   \sigma_{i,l}^2 \ma{I}_{M_i}  + \sum_{ j \in \mathbb{I} } \ma{H}_{ij}^l \ma{V}_j^l{\ma{V}_j^l}^H {\ma{H}_{ij}^l}^H   \bigg) \bigg) ,
\end{align}}}
where $\ma{\Sigma}_{i}^k \in \compl^{M_i \times M_i}$ is obtained considering $0 \leq {\beta}_{l} \ll 1$, $0 \leq {\kappa}_{l} \ll 1$, and hence ignoring the terms containing higher orders of the distortion coefficients in (\ref{eq_model_aggregate_interference_covariance_CSI_Perfect}).
\subsection{Remarks}
\begin{itemize}
\item In this section, we have assumed the availability of perfect {CSI} and focused on the impact of non-linear transceiver distortions. This assumption is relevant for the scenarios with stationary channel, e.g., a backhaul directive link with zero mobility \cite{7306534}, where an adequately long training sequence can be applied, see \cite[Subsection~III.A]{DMBS:12}. The impact of the {CSI} inaccuracy is later addressed in Section~\ref{sec:WMMSE_CSI_Error}.   
\item As expected, the role of the distortion signals on the \rev{RSI}, including the resulting {ICL}, is evident from (\ref{eq_model_aggregate_interference_covariance_CSI_Perfect}). It is the main goal of the remaining parts of this chapter to incorporate and evaluate this impact on the design of the defined {MC} system. 
\end{itemize}

\vspace{-4mm}
\section{Linear Transceiver Design for Multi-Carrier Communications} \label{sec:WMMSE}

Via the application of $\ma{V}_i^k$ and $\ma{U}_i^k$, as the linear transmit precoder and receive filters, the mean-squared-error (MSE) matrix of the defined system is calculated as
\begin{align} \label{MSE_Matrix}
\ma{E}_i^k :&= \mathbb{E} \left\{ \left( \tilde{\ma{s}}_i^k - {\ma{s}}_i^k \right) \left( \tilde{\ma{s}}_i^k - {\ma{s}}_i^k \right)^H \right\} \nonumber \\  &= \left( {\ma{U}_i^k}^H \ma{H}_{ii}^k \ma{V}_i^k - \ma{I}_{d_i} \right)\left( {\ma{U}_i^k}^H \ma{H}_{ii}^k \ma{V}_i^k - \ma{I}_{d_{i}} \right)^H  \nonumber  \\ & \;\;\;\; + {\ma{U}_i^k}^H  \ma{\Sigma}_i^k {\ma{U}_i^k},  
\end{align}
where $\ma{\Sigma}_i^k$ is given in (\ref{eq_model_aggregate_interference_covariance_CSI_Perfect}). In the following we propose two design strategies for the defined system, proposing an alternating QCP framework. 
 
\subsection{Weighted MSE minimization via Alternating QCP (AltQCP)} \label{sec:MWMSE}
An optimization problem for minimizing the weighted sum MSE is written as 
\begin{subequations}  \label{eq:global_opt_problem_MWMSE}
\begin{align}
\underset{\mathbb{V},\mathbb{U}}{ \text{min}} \;\;   & \sum_{i \in \mathbb{I}}\sum_{k \in \mathbb{F}_K}  \text{tr} \left({\ma{S}_i^k} \ma{E}_i^k\right)  \label{eq:global_opt_problem_MWMSE_a} \\
{\text{s.t.}} \;\; & \text{tr}\bigg( \left( \ma{I}_{N_i} + K\ma{\Theta}_{\text{tx},i}\right) \sum_{l\in \mathbb{F}_K} \ma{V}_i^l{\ma{V}_i^l}^H \bigg) \leq P_i, \;\; \forall i \in \mathbb{I},  \label{eq:global_opt_problem_MWMSE_b}
\end{align} 
\end{subequations}
where $\mathbb{X}:= \{\ma{X}_{i}^k, \; \forall i \in \mathbb{I}, \; \forall k \in \mathbb{F}_K\}$, with $\mathbb{X} \in \{\mathbb{U}, \mathbb{V}\}$, and (\ref{eq:global_opt_problem_MWMSE_b}) represents the transmit power constraint. It is worth mentioning that the application of ${\ma{S}_i^k} \succ 0$, as a weight matrix associated with $\ma{E}_i^k$ is two-folded. Firstly, it may appear as a diagonal matrix, emphasizing the importance of different data streams and different users. Secondly, it can be applied as an auxiliary variable which later relates the defined weighted MSE minimization to a sum-rate maximization problem, see Subsection~\ref{sec:perf_CSI_Rate}. \par 
It is observed that (\ref{eq:global_opt_problem_MWMSE}) is not a jointly convex problem. Nevertheless, it holds a QCP structure separately over the sets $\mathbb{V}$ and $\mathbb{U}$, in each case when other variables are fixed. In this regard, the objective (\ref{eq:global_opt_problem_MWMSE_a}) can be decomposed over $\mathbb{U}$ for different communication directions, and for different subcarriers. The optimal minimum MSE (MMSE) receive filter can be hence calculated in closed form as 
\begin{align} \label{wmmse_U_mmse}
\ma{U}_{i,\text{mmse}}^k = \left( \ma{\Sigma}_i^k + \ma{H}_{ii}^k\ma{V}_i^k{\ma{V}_i^k}^H{\ma{H}_{ii}^k}^H \right)^{-1} {\ma{H}_{ii}^k} {\ma{V}_i^k}.
\end{align} 
Nevertheless, the defined problem is coupled over $\ma{V}_{i}^k$, due to the impact of inter-carrier leakage, as well as the power constraint (\ref{eq:global_opt_problem_MWMSE_b}). The Lagrangian function, corresponding to the optimization (\ref{eq:global_opt_problem_MWMSE}) over $\mathbb{V}$ is expressed as 
\begin{align} \label{Lagrangian_WMMSE}     
\mathcal{L} & \left( \mathbb{V}, \boldsymbol{\iota}\right)  :  = \sum_{i \in \mathbb{I}} \bigg( \iota_i { \mathcal{P}_i }\left( \mathbb{V} \right) + \sum_{k \in \mathbb{F}_K}  \text{tr} \left({\ma{S}_i^k} \ma{E}_i^k\right)     \bigg), \\ &  \mathcal{P}_i \left( \mathbb{V} \right):=  - P_i +   \text{tr}\bigg( \left( \ma{I}_{N_i} + K\ma{\Theta}_{\text{tx},i}\right) \sum_{l\in \mathbb{F}_K} \ma{V}_i^l{\ma{V}_i^l}^H \bigg),   
\end{align}
where $ \boldsymbol{\iota}:= \{\iota_i,\; i\in\mathbb{I}\}$ is the set of dual variables. The dual function, corresponding to the above Lagrangian is defined as 
\begin{align} 
\mathcal{F}\left( \boldsymbol{\iota} \right)  : & = \underset{\mathbb{V}}{\text{min}} \;\;   \mathcal{L} \left( \mathbb{V}, \boldsymbol{\iota}\right) \label{WMMSE_dualfunction} 
\end{align} 
where the optimal ${\ma{V}_i^k}$ is obtained as
\begin{align} \label{WMMSE_ClosedForm_V}
& {\ma{V}_i^k}^\star = \left( {\ma{J}_i^k} + \iota_i \left( \ma{I}_{N_i} + K\ma{\Theta}_{\text{tx},i}\right) + {\ma{H}_{ii}^k}^H {\ma{U}_i^k} {\ma{S}_i^k} {\ma{U}_i^k}^H {\ma{H}_{ii}^k }  \right)^{-1} \nonumber \\ & \;\;\;\;\;\;\;\;\;\;\;\;\;\;\;\;\;\;\;\;\;\;\;\;\;\;\;\;\;\;\;\;\;\;\;\;\;\;\;\;\;  \times {\ma{H}_{ii}^k}^H {\ma{U}_i^k}{\ma{S}_i^k},
\end{align} 
and
\begin{align} 
{\ma{J}_i^k} :&= \sum_{l \in \mathbb{F}_K} \sum_{j \in \mathbb{I}} \bigg( {\ma{H}_{ji}^k}^H \text{diag} \left( {\ma{U}_j^l} {\ma{S}_j^l} {\ma{U}_j^l}^H  {\ma{\Theta}_{\text{rx},j}}  \right) {\ma{H}_{ji}^k}  \nonumber \\  &  \;\;\;\;  +   \text{diag} \left({\ma{H}_{ji}^l}^H  {\ma{U}_j^l}  {\ma{S}_j^l} {\ma{U}_j^l}^H  {\ma{H}_{ji}^l}  {\ma{\Theta}_{\text{tx},i}}  \right)  
\bigg).   
\end{align}
Due to the convexity of the original problem (\ref{eq:global_opt_problem_MWMSE}) over $\mathbb{V}$, the defined dual problem is a concave function over $\boldsymbol{\iota}$, with $\mathcal{P}_i(\mathbb{V})$ as a subgradient, see \cite[Eq.~(6.1)]{bertesekas1999nonlinear}. As a result, the optimal $\boldsymbol{\iota}$ is obtained from the maximization 
\begin{align} 
\boldsymbol{\iota}^\star = \underset{\boldsymbol{\iota} \geq 0}{\text{argmax}} \; \mathcal{F}\left( \boldsymbol{\iota} \right),
\end{align} 
following a standard subgradient update,~\cite[Subsection~6.3.1]{bertesekas1999nonlinear}. \par
Utilizing the proposed optimization framework, the alternating optimization over $\mathbb{V}$ and $\mathbb{U}$ is continued until a stable point is obtained. Note that due to the monotonic decrease of the objective in each step, and the fact that (\ref{eq:global_opt_problem_MWMSE_a}) is non-negative and hence bounded from below, the defined procedure leads to a necessary convergence. Algorithm~\ref{alg:MWMSE} defines the necessary optimization steps.
{\small{\begin{algorithm}[H] 
 \small{	\begin{algorithmic}[1] 
\State{$\ell \leftarrow  {0}  ;  \;\;\;\; \text{(set iteration number to zero)}$}
\State{$\mathbb{V} \leftarrow \text{right~singular~matrix~initialization,~see~\cite[Appendix~A]{5585631}}$ }
\State{$\mathbb{U} \leftarrow \text{solve~(\ref{wmmse_U_mmse})}$ }
\Repeat 
\State{$\ell \leftarrow  \ell + 1 $}
\State{$\mathbb{V} \leftarrow \text{solve (\ref{WMMSE_ClosedForm_V}) or QCP (\ref{eq:global_opt_problem_MWMSE}), with fixed} \; \mathbb{U}$}

\State{$\mathbb{U} \leftarrow \text{solve (\ref{wmmse_U_mmse}) or QCP (\ref{eq:global_opt_problem_MWMSE}) with fixed} \; \mathbb{V}$}

\Until{$\text{a stable point, or maximum number of $\ell$ reached}$}

\State{\Return$\left\{\mathbb{U},\mathbb{V}\right\}$}
  \end{algorithmic} } 
 \caption{\small{Alternating QCP (AltQCP) for weighted MSE minimization} } \label{alg:MWMSE}
\end{algorithm}  }  	}

\subsection{Weighted MMSE (WMMSE) design for sum rate maximization} \label{sec:perf_CSI_Rate}
Via the utilization of $\ma{V}_{i}^k$ as the transmit precoders, the resulting communication rate for the $k$-th subcarrier and for the $i$-th communication direction is written as 
\begin{align} \label{model:rate_formulation}
I_i^k = \text{log}_2 \left| \ma{I}_{d_i} + {{\ma{V}}_{i}^k}^H{{\ma{H}}_{ii}^k}^H \big({\ma{\Sigma}}_{i}^k\big)^{-1} {\ma{H}}_{ii}^k{\ma{V}}_{i}^k  \right|, 
\end{align}
where ${\ma{\Sigma}}_{i}^k$ is defined in (\ref{eq_model_aggregate_interference_covariance_CSI_Perfect}). The sum rate maximization problem can be hence presented as
\begin{align} \label{eq:model_optimization}
\underset{\mathbb{V}}{ \text{max}} \;\; & \;\;\sum_{i \in \mathbb{I}}\sum_{k \in \mathbb{F}_K} \omega_i I_i^k , \;\; {\text{s.t.}} \;\; \text{(\ref{eq:global_opt_problem_MWMSE_b})}. 
\end{align} 
\rev{where $\omega_i \in \real^+$ is the weight associated with the communication direction $i$}. The optimization problem (\ref{eq:model_optimization}) is intractable in the current form. In the following we propose an iterative optimization solution, following the WMMSE method \cite{CACC:08}. \\
Via the application of the MMSE receive linear filters from (\ref{wmmse_U_mmse}), the resulting MSE matrix is obtained as 
\begin{align} \label{wmmse_E_mmse}
{\ma{E}_{ i,{\text{mmse}} }^k} = \left(  \ma{I}_{d_i} + {\ma{V}_i^k}^H {\ma{H}_{ii}^k}^H \left({\ma{\Sigma}_i^k}\right)^{-1} \ma{H}_{ii}^k \ma{V}_i^k \right)^{-1}.
\end{align}
By recalling (\ref{model:rate_formulation}), and upon utilization of $\ma{U}_{i,\text{mmse}}^k$, we observe the following useful connection to the rate function 
\begin{align} \label{wmmse_wmmseEquivalent}
I_i^k = - \text{log}_2 \left| \ma{E}_{i,\text{mmse}}^k \right|,
\end{align}
which facilitates the decomposition of rate function via the following lemma, see also \cite[Eq.~(9)]{CACC:08}.   
\vspace{-1mm}
\begin{lemma} \label{lemma_logdetE}
Let $\ma{E} \in \compl^{d \times d}$ be a positive definite matrix. The maximization of the term $\normalfont{-\text{log} \left|\ma{E} \right|}$ is equivalent to the maximization
\begin{align}
{\normalfont{\underset{\ma{E}, \ma{S}}{ \text{max}} - \text{tr}\left( \ma{S} \ma{E} \right) +  \text{log} \left|\ma{S} \right| + d,}}
\end{align} 
where $\ma{S} \in \compl^{d \times d}$ is a positive definite matrix, and we have 
\begin{align} \label{W_opt_perfectCSI}
\ma{S} = \ma{E}^{-1} ,
\end{align}
at the optimality.
\end{lemma}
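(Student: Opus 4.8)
The plan is to prove the scalar optimization identity by treating it as an unconstrained concave maximization over the auxiliary positive-definite matrix $\ma{S}$, solving the first-order stationarity condition explicitly, and substituting back to recover the target quantity $-\text{log}|\ma{E}|$. The statement really has two parts: first, that the optimizer is $\ma{S} = \ma{E}^{-1}$ as claimed in (\ref{W_opt_perfectCSI}); and second, that the resulting optimal value of the objective equals $-\text{log}|\ma{E}|$, so that the two maximization problems are equivalent. Since $\ma{E}$ is held fixed here (it is a given positive-definite matrix, not a maximization variable as the ``$\underset{\ma{E},\ma{S}}{\text{max}}$'' notation loosely suggests), the essential content is entirely about the inner maximization over $\ma{S}$.

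\emph{First}, I would fix $\ma{E} \succ 0$ and define $f(\ma{S}) := -\text{tr}(\ma{S}\ma{E}) + \text{log}|\ma{S}| + d$ on the cone of positive-definite matrices. The key analytic facts I will invoke are the matrix-calculus derivatives $\frac{\partial}{\partial \ma{S}}\,\text{tr}(\ma{S}\ma{E}) = \ma{E}^{\trans}$ (or $\ma{E}$ in the Hermitian/real-trace convention) and $\frac{\partial}{\partial \ma{S}}\,\text{log}|\ma{S}| = \ma{S}^{-\trans}$. Setting the gradient to zero yields $\ma{S}^{-1} = \ma{E}$, i.e. $\ma{S} = \ma{E}^{-1}$, which establishes (\ref{W_opt_perfectCSI}).

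\emph{Second}, to confirm this stationary point is the global maximizer rather than a saddle, I would appeal to the strict concavity of $f$ over the positive-definite cone: $-\text{tr}(\ma{S}\ma{E})$ is linear (hence concave) in $\ma{S}$, and $\text{log}|\ma{S}|$ is a standard strictly concave function of a positive-definite matrix. The sum is therefore strictly concave, so the unique stationary point is the unique global maximizer. \emph{Finally}, I would substitute $\ma{S} = \ma{E}^{-1}$ into $f$: the trace term gives $-\text{tr}(\ma{E}^{-1}\ma{E}) = -\text{tr}(\ma{I}_d) = -d$, the log-determinant term gives $\text{log}|\ma{E}^{-1}| = -\text{log}|\ma{E}|$, and adding the constant $+d$ cancels the $-d$, leaving precisely $-\text{log}|\ma{E}|$. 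This shows the optimal value of the auxiliary problem equals the original objective, completing the equivalence.

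I do not anticipate a genuine obstacle here, as the result is a classical reformulation underlying the WMMSE framework of \cite{CACC:08}; the only point requiring slight care is the bookkeeping of the complex matrix-derivative convention (Wirtinger calculus with $\ma{S}$ Hermitian) so that the stationarity condition comes out as $\ma{S}^{-1}=\ma{E}$ with the correct conjugate/transpose placement. Since both $\ma{E}$ and $\ma{S}$ are positive definite (hence Hermitian), these transpose/conjugate subtleties collapse and the identity holds cleanly.
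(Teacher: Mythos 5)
Your proof is correct. The paper does not actually prove this lemma itself --- its ``proof'' is a one-line citation to \cite[Lemma~2]{JPKR:11} --- and your argument (stationarity giving $\ma{S} = \ma{E}^{-1}$, global optimality from strict concavity of $\text{log}\left|\ma{S}\right|$ plus linearity of $-\text{tr}\left(\ma{S}\ma{E}\right)$ over the positive-definite cone, then substitution to recover $-\text{log}\left|\ma{E}\right|$ and hence the equivalence of the two maximizations) is precisely the standard argument underlying that cited result, so you have in effect supplied the proof that the paper outsources.
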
 \vspace{-1mm}
\begin{proof}
See \cite[Lemma~2]{JPKR:11}.
\end{proof}
By recalling (\ref{wmmse_wmmseEquivalent}), and utilizing Lemma~\ref{lemma_logdetE}, the original optimization problem over $\mathbb{V}$ can be equivalently formulated as
\begin{align} \label{eq:global_opt_problem_rate}
\underset{\mathbb{V},\mathbb{U},\mathbb{S}}{ \text{max}} \;\; &  \sum_{i \in \mathbb{I}} \omega_i \sum_{k \in \mathbb{F}_K}\bigg(\text{log}\left|\ma{S}_i^k\right|  + d_i  - \text{tr} \left({\ma{S}_i^k} \ma{E}_i^k\right)\bigg) \;\; {\text{s.t.}} \;\; \text{(\ref{eq:global_opt_problem_MWMSE_b})}, 
\end{align} 
where $\mathbb{S}:= \{\ma{S}_{i}^k \succ 0, \; \forall i \in \mathbb{I}, \; \forall k \in \mathbb{F}_K\}$.  
The obtained optimization problem (\ref{eq:global_opt_problem_rate}) is not a jointly convex problem. Nevertheless, it is a QCP over $\mathbb{V}$ when other variables are fixed, and can be obtained with a similar structure as for (\ref{eq:global_opt_problem_MWMSE}). Moreover, the optimization over $\mathbb{U}$ and $\mathbb{S}$ is respectively obtained from (\ref{wmmse_U_mmse}), and (\ref{W_opt_perfectCSI}) as $\ma{S}_i^k = {\ma{E}_i^k}^{-1}$. This facilitates an alternating optimization where in each step the corresponding problem is solved to optimality, see Algorithm~\ref{alg:WMMSE_rate}. The defined alternating optimization steps results in a necessary convergence due to the monotonic increase of the objective in each step, and the fact that the eventual system sum rate is bounded from above. 
{\small{ \begin{algorithm}[H] 
 \small{	\begin{algorithmic}[1] 
\State{$\text{Algorithm~1, Steps 1-2}  \;\;\;\; \text{(initialization)}$}
\Repeat 
\State{$\text{Algorithm~1, Steps 5-7}$}
\State{$\mathbb{S} \leftarrow \ma{S}_i^k = \left( {\ma{E}_i^k} \right)^{-1}$ }
\Until{$\text{a stable point, or maximum number of $\ell$ reached}$}

\State{\Return$\left\{\mathbb{V} \right\}$}
  \end{algorithmic} } 
 \caption{\small{AltQCP-WMMSE design for sum rate maximization} } \label{alg:WMMSE_rate}
\end{algorithm} }}

\rev{\section{Bidirectional FD Massive MIMO Systems: Joint Power and Subcarrier Allocation}
In this part, we extend the studied system into an asymmetric setup, where an FD transceiver equipped with a large antenna array (e.g., a basestation) performs a bidirectional communication with multiple FD single-antenna nodes (e.g., users). Thanks to the FD capability and multi-user beamforming, the communication at different directions can flexibly coexist on shared subcarriers, improving the spectral efficiency, or can be accommodated on different subcarriers in order to control the interference. Please note that the impact of hardware impairments is known to be significant for a system with a large antenna array, due to the lower per-element cost, e.g., low resolution ADC and DAC \cite{7932472}. This signifies the role of the characterization in Lemma~\ref{OFDM_Distortion} regarding the impact of hardware impairments for an FD MIMO OFDM system. In order to extend the defined setup to an asymmetric one, we denote the set of communication directions from (to) the users to (from) the massive MIMO transceiver as $\mathbb{I}_{UL}$ ($\mathbb{I}_{DL}$), such that $\mathbb{I} = \mathbb{I}_{UL} \cup \mathbb{I}_{DL}$. Moreover, the lower-case notations\footnote{\rev{The channel dimensions are accordingly obtained as $\ma{H}^k_{ij}\in{\compl^{1}}$ when $i \in \mathbb{I}_{UL}$, $j \in \mathbb{I}_{DL}$, $\ma{h}^k_{ii}\in{\compl^{\tilde{M}}} ({\compl^{1 \times \tilde{N}}})$ when $i \in \mathbb{I}_{UL} (\mathbb{I}_{DL})$, and $\ma{H}^k_{ji}\in{\compl^{\tilde{M} \times \tilde{N}}}$ when $i \in \mathbb{I}_{DL} $, $j \in \mathbb{I}_{UL}$. $\tilde{N}$ ($\tilde{M}$) represent the number of transmit (receive) antennas at the massive MIMO transceiver.}} ($\tilde{\ma{f}}^k_{i}$) ${\ma{f}}^k_{i}$ and $\ma{u}^k_{i}$ are used to represent the (normalized) transmit and receive linear filters\footnote{\rev{Due to the properties of the large antenna arrays, the transmit precoder and receive filters are usually chosen via a maximum ratio transmission/combining (MRT/MRC) strategy \cite{7932472}, a projection to the null-space of the SI channel~\cite{7575689} or via a joint user and SI spatial zero-forcing~\cite{6876818, 7208847}, resulting in a different performance-complexity tradeoff.}}. Moreover, we have ${\ma{f}}^k_{i} = \tilde{\ma{f}}^k_{i} \sqrt{p_{i,k}}$ where $p_{i,k}$ denotes the transmit power. In this part, we perform a joint subcarrier and power allocation with the goal of maximizing the system sum rate. An upper bound on the achievable information rate is obtained as 
\begin{align} \label{MMIMO_RateUpperBound}
R_{i,k}^{\text{UB}} = \gamma_0 \text{log}_2 \bigg( 1 + \frac{ \left|\left(\ma{u}^k_{i}\right)^H  \ma{h}_{ii}^k  \tilde{\ma{f}}^m_{i} \right|_2^2 p_{i,k} }{  \sigma_{i,k}^2  + \sum_{j \in \mathbb{I}} \sum_{m \in \mathbb{F}_K} \gamma_{ij}^{km} p_{j,m}  } \bigg)
\end{align}
where $0 < \gamma_0 < 1 $ indicates the portion of the frame duration dedicated to data communication, and  
{\small{\begin{align} \label{MMIMO_gamma}
 \gamma_{ij}^{km} :&= \underbrace{ \delta_{ij}^{kl} \left|\left(\ma{u}^k_{i}\right)^H \ma{H}^m_{ij}  \tilde{\ma{f}}^m_{j}  \right|^2  }_{\text{co-channel interference}} +  \nonumber \\   & +\underbrace{\left(\ma{u}^k_{i}\right)^H \ma{H}^k_{ij} \ma{\Theta}_{\text{tx},j} \text{diag} \left( \tilde{\ma{f}}^m_{j} \left(\tilde{\ma{f}}^m_{j}\right)^H \right) \left(\ma{H}^k_{ij}\right)^H \ma{u}^k_{i}}_{\text{transmitter distortion}}  \nonumber \\
& + \underbrace{ \left(\ma{u}^k_{i}\right)^H \ma{\Theta}_{\text{rx},i} \text{diag} \left( \ma{H}^m_{ij}  \tilde{\ma{f}}^m_{j} \left(\tilde{\ma{f}}^m_{j}\right)^H  \left(\ma{H}^m_{ij}\right)^H \right) \ma{u}^k_{i} }_{\text{receiver distortion}}, 
\end{align}}}    
where $\delta_{ij}^{km}=0$ if $k\neq m$ or $j \in \mathbb{I}_{DL}, i \in \mathbb{I}_{UL}$ and otherwise $\delta_{ij}^{km}=1$. Please note that the given upper bound in (\ref{MMIMO_RateUpperBound}) is obtained similar to \cite{7208847} assuming an accurate CSI, please see Section~\ref{sec:WMMSE_CSI_Error} for the consideration of CSI error. It is worth mentioning the impact of hardware distortions on the \rev{RSI}, as well as the ICL is evident from (\ref{MMIMO_gamma})\footnote{In particular to a massive MIMO transceiver, where low-resolution quantization is used, the distortion coefficient $\kappa_l$ in $\ma{\Theta}_{\text{tx},i}$ (and similarly $\beta_l$ in $\ma{\Theta}_{\text{rx},i}$) is obtained as $\kappa_l~\text{[dB]} = -6.02 b_l$, where $b_l$ is the number of quantization bits at the chain $l$.}. The optimization problem for maximizing the system sum rate is formulated as 
\begin{subequations} \label{mMIMO_optimizationproblem}                             
\begin{align} 
\underset{{p_{i,k}} \geq 0}{\text{max}} \;\; & \;\;\sum_{i \in \mathbb{I}} \omega_i \sum_{k \in \mathbb{F}_K} R_{i,k}^{\text{UB}} , \;\; \\ {\text{s.t.}} \;\;  &\sum_{k \in \mathbb{F}_K} p_{i,k} \leq P_i, i \in \mathbb{I}_{UL}, \sum_{i \in \mathbb{I}_{DL}} \sum_{k \in \mathbb{F}_K} p_{i,k} \leq P_i, i \in \mathbb{I}_{DL}.\label{mMIMO_optimizationproblem_b}
\end{align}
\end{subequations}
It can be observed that (\ref{mMIMO_optimizationproblem}) is not a jointly convex optimization problem. However, it falls into the class of smooth difference-of-convex (DC) optimization problems. In this regard, we propose an iterative optimization, following the successive inner approximation (SIA) framework \cite{marks1978technical} which is proven to converge to a point satisfying KKT optimality conditions. Let $p_{i,k,0}$ be a feasible transmit power value. Then, employing the first order Taylor’s approximation on the concave terms, the value of $R_{i,k}^{\text{UB}}$ is lower-bounded as 
{\small{\begin{align} \label{mMIMO_UB_LB}
R_{i,k}^{\text{UB}} & \geq \gamma_0 \text{log}_2 \bigg( \left\|\ma{h}_{ii}^k\right\|_2^2 p_{i,k} + \sum_{j \in \mathbb{I}} \sum_{m \in \mathbb{F}_K} \gamma_{ij}^{km} p_{j,m}   + \sigma_{i,k}^2  \bigg)  \nonumber \\ &- \gamma_0 \text{log}_2 \bigg( \sum_{j \in \mathbb{I}} \sum_{m \in \mathbb{F}_K}  \gamma_{ij}^{km} p_{j,m,0} + \sigma_{i,k}^2  \bigg) \nonumber \\   & - \frac{\gamma_0\sum_{j \in \mathbb{I}} \sum_{m \in \mathbb{F}_K}  \gamma_{ij}^{km} \left( p_{j,m} - p_{j,m,0} \right) }{\text{log}(2) \sum_{j \in \mathbb{I}} \sum_{m \in \mathbb{F}_K} \gamma_{ij}^{km} p_{j,m,0}   + \sigma_{i,k}^2} =: \overline{R_{i,k}^{\text{UB}}},
\end{align}}}\\ 
where $\overline{R_{i,k}^{\text{UB}}}$ is a jointly concave function over $p_{i,k}$, facilitating an iterative update where in each iteration the convex problem $\underset{{p_{i,k}} \geq 0}{\text{max}} \;\; \sum_{i \in \mathbb{I}} \sum_{k \in \mathbb{F}_K} \overline{R_{i,k}^{\text{UB}}}$ s.t.~(\ref{mMIMO_optimizationproblem_b}) is solved to the optimality. The proposed iterative update is continued until a stable solution is obtained. It can be observed that $\overline{R_{i,k}^{\text{UB}}}$ represents a tight and global lower bound to ${R_{i,k}^{\text{UB}}}$, with a shared slope at the point of approximation $p_{j,m,0}$\footnote{\rev{This is directly concluded for a first-order Taylor's approximation on any smooth convex function~\cite{BV:04}.}}. As a result, the proposed iterative update follows the requirements set in \cite[Theorem~1]{marks1978technical}, with a proven convergence to a solution satisfying the KKT conditions. 


}

\vspace{-4mm}
\section{Robust Design with Imperfect CSI} \label{sec:WMMSE_CSI_Error}
In many realistic scenarios the CSI matrices can not be estimated or communicated accurately due to the limited channel coherence time as a result of, e.g., reflections from a moving object, or due to dedicating limited resource on the training/feedback process. This issue becomes more significant in \rev{an FD} system, due to the strong \rev{SI} channel which calls for dedicated silent times for tuning and training process, see~\cite[Subsection~III.A]{DMBS:12}. In particular, the impact of CSI error on the defined MC FD system is three-fold. Firstly, similar to the usual HD scenarios, it results in the erroneous equalization in the receiver, as the communication channels are not accurately known. Secondly, it results in an inaccurate estimation of the received signal from the \rev{SI} path, and thereby degrades the SIC quality. Finally, due to the CSI error, the impact of the distortion signals may not be accurately known, as the statistics of the distortion signals directly depend on the channel situation. In this part we extend the proposed designs in Section~\ref{sec:WMMSE} where the aforementioned uncertainties, resulting from CSI error, are also taken into account. 

\vspace{-1mm} \subsection{Norm-bounded CSI error}
In this part we update the defined system model in Section~\ref{sec:model} to the scenario where the CSI is known erroneously. In this respect we follow the so-called deterministic model \cite{WP:09}, where the error matrices are not known but located, with a sufficiently high probability, within a known feasible error region\footnote{\rev{The feasible error region can be obtained from the statistical distribution of the true CSI values, as a minimum radius ball or ellipsoid containing the true CSI values with a desired confidence probability, or via the knowledge of the CSI quantization strategy, in case the CSI error is dominated by feedback quantization.}}. This is expressed as 
\begin{align} \label{eq:model_channel_eror}
\ma{H}_{ij}^{k} =  \tilde{\ma{H}}_{ij}^{k} + {\ma{\Delta}}_{ij}^{k}, \;\; {\ma{\Delta}}_{ij}^{k} \in \mathbb{D}_{ij}^{k}, \;\; i,j\in\mathbb{I},
\end{align}
and
\begin{align} \label{eq:model_channel_eror}
\mathbb{D}_{ij}^{k} :=  \left\{ {\ma{\Delta}}_{ij}^{k} \; \big{\vert} \;  \| {\ma{D}}_{ij}^{k} {\ma{\Delta}}_{ij}^{k} \|_F \leq \zeta_{ij}^{k} \right\}, \;\; \forall i,j \in\mathbb{I}, \; k \in \mathbb{F}_{K},
\end{align}
where $\tilde{\ma{H}}_{ij}^{k}$ is the estimated channel matrix and ${\ma{\Delta}}_{ij}^{k}$ represents the channel estimation error. Moreover, ${\ma{D}}_{ij}^{k}\succeq 0$ and $\zeta_{ij}^{k} \geq 0$ jointly define a feasible ellipsoid region for ${\ma{\Delta}}_{ij}^{k}$ which generally depends on the noise and interference statistics, and the used channel estimation method. For further elaboration on the used error model see \cite{WP:09} and the references therein.  \\
The aggregate interference-plus-noise signal at the receiver is hence updated as  
\begin{align} \label{model:ri_signal_CSIError}
{\boldsymbol{\nu}}_{i}^k & = \ma{H}_{ij}^k \ma{e}_{\text{t},j}^k + \ma{H}_{ii}^k \ma{e}_{\text{t},i}^k + \ma{e}_{\text{r},i}^k + \ma{\Delta}_{ij}^k \ma{V}_{j}^k \ma{s}_{j}^k + \ma{n}_{i}^k, \;\; j \neq i \in \mathbb{I},
\end{align}
where $\ma{\Sigma}_i^k$, representing the covariance of ${\boldsymbol{\nu}}_{i}^k$, is expressed in (\ref{eq_model_aggregate_interference_covariance_CSIError}).  

\begin{figure*}[!ht]
\normalsize
{\small{\begin{align} \label{eq_model_aggregate_interference_covariance_CSIError}
\ma{\Sigma}_{i}^k & = \ma{\Delta}_{ij}^k \ma{V}_j^k{\ma{V}_j^k}^H {\ma{\Delta}_{ij}^k}^H \hspace{-1mm}
  + \hspace{-1mm} \sum_{ j \in \mathbb{I} } \hspace{-0.5mm} \ma{H}_{ij}^k \ma{\Theta}_{\text{tx},j} \text{diag} \left( \hspace{-0.5mm} \sum_{l \in \mathbb{F}_K } \ma{V}_j^l{\ma{V}_j^l}^H \hspace{-1mm} \right) {\ma{H}_{ij}^k}^H \hspace{-1mm} + \hspace{-1mm} \ma{\Theta}_{\text{rx},i}   \text{diag} \bigg( \hspace{-0.5mm} \sum_{l\in\mathbb{F}_K}  \hspace{-1mm} \bigg(   \hspace{-1mm} \sigma_{i,l}^2 \ma{I}_{M_i} \hspace{-1mm} + \hspace{-1mm} \sum_{ j \in \mathbb{I} } \ma{H}_{ij}^l \ma{V}_j^l{\ma{V}_j^l}^H {\ma{H}_{ij}^l}^H  \hspace{-1mm} \bigg) \bigg) \hspace{-0.5mm} + \hspace{-0.5mm} \sigma_{i,k}^2 \ma{I}_{M_i}.
\end{align} }}
\hrulefill
\vspace*{-0mm}
\end{figure*}


\vspace{-1mm} \subsection{Alternating SDP (AltSDP) for worst-case MSE minimization} \label{WMMSE_CSI_Error}
An optimization problem for minimizing the worst-case MSE under the defined norm-bounded CSI error is written as 
\begin{align}
\underset{\mathbb{V},\mathbb{U}}{ \text{min}} \;\; \underset{\mathbb{C}}{\text{max}} \;\;  & \sum_{i \in \mathbb{I}}\sum_{k \in \mathbb{F}_K}  \text{tr} \left({\ma{S}_i^k} \ma{E}_i^k\right),   \nonumber \\ 
 {\text{s.t.}} \;\;\; &  \text{(\ref{eq:global_opt_problem_MWMSE_b})}, \;\; \ma{\Delta}_{ij}^{k} \in \mathbb{D}_{ij}^{k}, \;\;  \forall i,j \in \mathbb{I}, \;\;  k \in \mathbb{F}_K,   \label{eq:global_opt_problem_MWMSE_CSIError}
\end{align}  
where $\mathbb{C} := \{\ma{\Delta}_{ij}^k, \; \forall i,j \in \mathbb{I} , \;\forall k \in \mathbb{F}_K\}$, and $\ma{E}_{i}^k$ is obtained from (\ref{MSE_Matrix}) and (\ref{eq_model_aggregate_interference_covariance_CSIError}). Note that the above problem is intractable, due to the inner maximization of quadratic convex objective over $\mathbb{C}$, which also invalidates the observed convex QCP structure in (\ref{eq:global_opt_problem_MWMSE}). In order to formulate the objective into a tractable form, we calculate
{\small{\begin{align}
& \sum_{k\in \mathbb{F}_K} \text{tr} \left({\ma{S}_i^k} \ma{E}_i^k \right) \nonumber \\ & =  \sum_{k\in \mathbb{F}_K} \Bigg{(}
\left\| {\ma{W}_i^k}^H \left( {\ma{U}_i^k}^H \ma{H}_{ii}^k \ma{V}_i^k - \ma{I}_{d_i} \right) \right\|_{{F}}^2 \nonumber \\
 & \;\;\;\; + \left\| {\ma{W}_i^k}^H {\ma{U}_i^k}^H \ma{\Delta}_{i3-i}^k \ma{V}_{3-i}^k   \right\|_{{F}}^2 + \sigma_{i,k}^2 \left\|  {\ma{W}_i^k}^H {\ma{U}_i^k}^H\right\|_{{F}}^2 \nonumber   \\
 & \;\;\;\; + \sum_{j \in \mathbb{I} } \sum_{l\in \mathbb{F}_{N_j}}  \sum_{m\in \mathbb{F}_K}   \left\| {\ma{W}_i^k}^H {\ma{U}_i^k}^H \ma{H}_{ij}^k \left(\ma{\Theta}_{\text{tx},j}\right)^{\frac{1}{2}}  \ma{\Gamma}_{N_j}^l  \ma{V}_j^m   \right\|_{{F}}^2  \nonumber \\
 & \;\;\;\; + \sum_{j \in \mathbb{I} } \sum_{l\in \mathbb{F}_{M_i}}  \sum_{m\in \mathbb{F}_K}   \left\| {\ma{W}_i^k}^H {\ma{U}_i^k}^H \left(\ma{\Theta}_{\text{rx},i}\right)^{\frac{1}{2}}  \ma{\Gamma}_{M_i}^l  \ma{H}_{ij}^m\ma{V}_j^m   \right\|_{{F}}^2  \nonumber \\
 & \;\;\;\;+ \left\|  {\ma{W}_i^k}^H {\ma{U}_i^k}^H   \left( \ma{\Theta}_{\text{rx},i} \sum_{q\in\mathbb{F}_K} \sigma_{i,q}^2  \right)^{\frac{1}{2}}  \right\|_{F}^2 \Bigg{)}  \label{error_representation_fro_norm} \\
& = {\sum_{j \in \mathbb{I} }  \sum_{k \in \mathbb{F}_K } \Big\| \ma{c}_{ij}^k + \ma{C}_{ij}^k \text{vec}\left( \ma{\Delta}_{ij}^k \right) \Big\|_{2}^2 } ,  \label{quadratic_error_representation_final}
\end{align}  }}
where $\ma{\Gamma}_{M}^l$ is an $M \times M$ zero matrix except for the $l$-th diagonal element equal to $1$. In the above expressions $\ma{W}_{i}^k = \left( \ma{S}_{i}^k \right)^{\frac{1}{2}}$, and 
{\small{\begin{align}\label{calculate_c_ij}
& \ma{c}_{ij}^k {:=}  \nonumber \\   &\left[ \begin{array}{c} \delta_{ij} \text{vec} \left( {\ma{W}_i^k}^H \left( {\ma{U}_i^k}^H \tilde{\ma{H}}_{ij}^k\ma{V}_j^k  -  \ma{I}_{d_j \delta_{ij}}   \right) \right)  \\   \left\lfloor \text{vec} \left( {\ma{W}_i^k}^H {\ma{U}_i^k}^H \tilde{\ma{H}}_{ij} \left( \ma{\Theta}_{\text{tx},j} \right)^{\frac{1}{2}}  \ma{\Gamma}_{N_j}^l   \ma{V}_j^m \right) \right\rfloor_{l \in \mathbb{F}_{N_j} , m \in \mathbb{F}_K}   \\ \left\lfloor \text{vec} \left( {\ma{W}_i^m}^H {\ma{U}_i^m}^H  \left( \ma{\Theta}_{\text{rx},i}  \right)^{\frac{1}{2}} \ma{\Gamma}_{M_{i}}^l \tilde{\ma{H}}_{ij}^k  \ma{V}_j^k \right) \right\rfloor_{l \in \mathbb{F}_{M_i}, m \in \mathbb{F}_K } \\   \delta_{ij} \text{vec} \left( {\ma{W}_i^k}^H {\ma{U}_i^k}^H  \left( \sigma_{i,k}^2 \ma{I}_{M_i} + \ma{\Theta}_{\text{rx},i} \sum_{m \in \mathbb{F}_K} \sigma_{i,m}^2 \right)^{\frac{1}{2}} \right) \end{array} \right], 
\end{align}
\begin{align}\label{calculate_C_ij}
& \ma{C}_{ij}^k {:=}  \nonumber \\ &\left[ \begin{array}{c}  {\ma{V}_j^k}^T \otimes  \left( {\ma{W}_i^k}^H {\ma{U}_i^k}^H  \right)  \\
\left\lfloor  \left(  \left( \ma{\Theta}_{\text{tx},j} \right)^{\frac{1}{2}}  \ma{\Gamma}_{N_j}^{l} \ma{V}_j^m \right)^T \otimes \left( {\ma{W}_i^k}^H {\ma{U}_i^k}^H \right) \right\rfloor_{l \in \mathbb{F}_{N_j} , m\in \mathbb{F}_K}  \\ \left\lfloor {\ma{V}_j^k}^T \otimes \left( {\ma{W}_i^m}^H {\ma{U}_i^m}^H \left( \ma{\Theta}_{\text{rx},i} \right)^{\frac{1}{2}} \ma{\Gamma}_{M_i}^l \right) \right\rfloor_{l \in \mathbb{F}_{M_i}, m \in \mathbb{F}_K } \\ \ma{0}_{M_i d_i \times M_i N_i}  \end{array} \right],
\end{align}  }}
where $\delta_{ij}$ is the Kronecker delta where $\delta_{ij}=1$ for $i=j$ and zero otherwise. Moreover we have $\ma{c}_{ij}^k \in \compl^{ \tilde{d}_{ij} \times 1}$, $\ma{C}_{ij}^k \in \compl^{ \tilde{d}_{ij} \times {M_i N_j}}$ such that  
\begin{align}\label{length_d}
& \tilde{d}_{ij} :=  d_id_j \left( 1 + K \left( N_j + M_i \right) \right) + d_i M_i .
\end{align}
Please note that (\ref{error_representation_fro_norm}) is obtained by recalling (\ref{MSE_Matrix}) and (\ref{eq_model_aggregate_interference_covariance_CSIError}) and the known matrix equality \cite[Eq.~(516)]{MCB:08}, and (\ref{calculate_c_ij})-(\ref{calculate_C_ij}) are calculated via the application of \cite[Eq.~(496),~(497)]{MCB:08}. \par

By applying the Schur's complement lemma on the epigraph form of the quadratic norm (\ref{quadratic_error_representation_final}), i.e., $ \big\| \ma{c}_{ij}^k + \ma{C}_{ij}^k \text{vec}\left( \ma{\Delta}_{ij}^k \right) \big\|_{2}^2 \leq \tau_{ij}^k$, the optimization problem (\ref{eq:global_opt_problem_MWMSE_CSIError}) is equivalently written as    
\begin{subequations} \label{global_opt_problem_MWMSE_CSIError}
\begin{align}
& \underset{\mathbb{V},\mathbb{U}, \mathbb{T}} {\text{min}} \;\; \underset{\mathbb{C}}{\text{max}} \;\;   \sum_{i \in \mathbb{I}}\sum_{k \in \mathbb{F}_K}  \tau_{ij}^k,  \;\;\; {\text{s.t.}} \;\;\; \text{(\ref{eq:global_opt_problem_MWMSE_b})}, \;\; \| \ma{b}_{ij}^k \|_{F} \leq \zeta_{ij}^k,    \\    
&  \left[\hspace{-0mm}\begin{array}{cc} 0 &\hspace{-0mm}  {{{\ma{b}}_{ij}^k}}{}^H {{\tilde{\ma{D}}_{ij}^k}}{}^{H} {{\ma{C}_{ij}^k}}{}^H   \\   {\ma{C}_{ij}^k}{\tilde{\ma{D}}_{ij}}^k {\ma{b}}_{ij}^k  & \hspace{-0mm}\ma{0}_{\tilde{d}_{ij} \times \tilde{d}_{ij}} \end{array} \hspace{-0mm}\right] \hspace{-0mm}+\hspace{-0mm} \left[\begin{array}{cc} \tau_{ij}^k & {{{\ma{c}}_{ij}^k}}{}^H \\   {\ma{c}}_{ij}^k &  \ma{I}_{\tilde{d}_{ij}} \end{array} \right] \succeq 0, 
\end{align}
\end{subequations} 
where $\mathbb{T}:= \{\tau_{ij}^k, \; \forall i,j \in \mathbb{I}, \; \forall k \in \mathbb{F}_K\}$ and 
\begin{align}
\tilde{\ma{D}}_{ij}^k & := \ma{I}_{N_j} \otimes \left({\ma{D}_{ij}^k}\right)^{-1}, \\
\tilde{\ma{\Delta}}_{ij}^k & := \ma{D}_{ij}^k {\ma{\Delta}}_{ij}^k, \;\; \ma{b}_{ij}^k  := \text{vec}\left(\tilde{\ma{\Delta}}_{ij}^k \right), 
\end{align}
are defined for notational simplicity. The problem (\ref{global_opt_problem_MWMSE_CSIError}) is still intractable, due to the inner maximization. The following lemma converts this structure into a tractable form. 

\vspace{-2mm}
\begin{lemma} \label{petersen}
Generalized Petersen's sign-definiteness lemma: Let $\ma{Y} = \ma{Y}^H$, and $\ma{X},\ma{P},\ma{Q}$ are arbitrary matrices with complex valued elements. Then we have
\begin{align}
\ma{Y} \succeq \ma{P}^H \ma{X} \ma{Q} + \ma{Q}^H \ma{X}^H \ma{P}, \;\; \forall \ma{X} \; : \; \| \ma{X}\|_F \leq \zeta, 
\end{align} 
if and only if 
\begin{align}
\exists \lambda \geq 0, \; \left[\begin{array}{cc} \ma{Y} - \lambda \ma{Q}^H\ma{Q} & - \zeta \ma{P}^H \\   - \zeta \ma{P} & \lambda \ma{I} \end{array} \right] \succeq 0.
\end{align} 
\end{lemma}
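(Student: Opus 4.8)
The plan is to prove both implications by reducing the robust matrix inequality to a scalar quadratic statement and then invoking the (complex) S-procedure. \textbf{Sufficiency} is the short, self-contained direction, so I would dispatch it first. Assuming the block matrix is positive semidefinite for some $\lambda\ge0$, I would test it against a stacked vector $[\ma{z}^H,\ma{w}^H]^H$ and, for a fixed $\ma{z}$ and any admissible $\ma{X}$, substitute the particular choice $\ma{w}=\zeta^{-1}\ma{X}\ma{Q}\ma{z}$. Expanding the resulting scalar inequality reproduces $\ma{z}^H\ma{Y}\ma{z}-\ma{z}^H(\ma{P}^H\ma{X}\ma{Q}+\ma{Q}^H\ma{X}^H\ma{P})\ma{z}$ plus the residual $\lambda\,\ma{z}^H\ma{Q}^H(\ma{I}-\zeta^{-2}\ma{X}^H\ma{X})\ma{Q}\ma{z}$. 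Since $\fronorm{\ma{X}}\le\zeta$ forces $\ma{X}^H\ma{X}\preceq\zeta^2\ma{I}$ (the spectral norm is dominated by the Frobenius norm), this residual is nonnegative, so $\ma{z}^H(\ma{Y}-\ma{P}^H\ma{X}\ma{Q}-\ma{Q}^H\ma{X}^H\ma{P})\ma{z}\ge0$ holds for every $\ma{z}$ and every admissible $\ma{X}$, which is exactly the claimed semidefiniteness.

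For \textbf{necessity} I would first rewrite the robust constraint through quadratic forms. The inequality $\ma{Y}\succeq\ma{P}^H\ma{X}\ma{Q}+\ma{Q}^H\ma{X}^H\ma{P}$ for all $\fronorm{\ma{X}}\le\zeta$ is equivalent to $\ma{z}^H\ma{Y}\ma{z}\ge 2\max_{\fronorm{\ma{X}}\le\zeta}\mathrm{Re}\{\ma{z}^H\ma{P}^H\ma{X}\ma{Q}\ma{z}\}$ for every $\ma{z}$. Writing $\ma{z}^H\ma{P}^H\ma{X}\ma{Q}\ma{z}=\mathrm{tr}\{\ma{X}\ma{Q}\ma{z}(\ma{P}\ma{z})^H\}$ and applying the Cauchy--Schwarz inequality for the Frobenius inner product, which is tight for a rank-one $\ma{X}$ aligned with $\ma{P}\ma{z}(\ma{Q}\ma{z})^H$, evaluates the inner maximum in closed form as $\zeta\,\twonorm{\ma{P}\ma{z}}\,\twonorm{\ma{Q}\ma{z}}$. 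Hence the constraint collapses to $\ma{z}^H\ma{Y}\ma{z}\ge 2\zeta\,\twonorm{\ma{P}\ma{z}}\,\twonorm{\ma{Q}\ma{z}}$ for all $\ma{z}$.

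The remaining, and genuinely hard, step is to linearize the product of norms and to certify that a \emph{single} multiplier $\lambda$ serves all $\ma{z}$. I would introduce an auxiliary vector $\ma{u}$ and use $2\zeta\,\twonorm{\ma{P}\ma{z}}\,\twonorm{\ma{Q}\ma{z}}=\max_{\twonorm{\ma{u}}\le\twonorm{\ma{Q}\ma{z}}}2\zeta\,\mathrm{Re}\{\ma{u}^H\ma{P}\ma{z}\}$, turning the constraint into the implication that $\twonorm{\ma{Q}\ma{z}}^2-\twonorm{\ma{u}}^2\ge0$ forces $\ma{z}^H\ma{Y}\ma{z}-2\zeta\,\mathrm{Re}\{\ma{u}^H\ma{P}\ma{z}\}\ge0$ on the stacked variable $[\ma{z}^H,\ma{u}^H]^H$. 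This is one Hermitian quadratic form staying nonnegative wherever another is nonnegative, so the complex S-lemma applies: a Slater point exists (take $\ma{u}=\ma{0}$ with $\ma{Q}\ma{z}\neq\ma{0}$), and losslessness of the S-procedure delivers a single $\lambda\ge0$ with $f-\lambda g\succeq0$; assembling the coefficient matrices of $f$ and $g$ shows $f-\lambda g$ is precisely the block matrix in the statement. The obstacle concentrates entirely here, since the existence of a uniform $\lambda$ rests on strong duality rather than elementary manipulation; the degenerate configurations $\ma{Q}=\ma{0}$ or $\ma{P}=\ma{0}$, where the constraint qualification may fail, I would settle separately by direct inspection or a continuity argument on $\zeta$.
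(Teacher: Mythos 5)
The paper itself offers no proof of this lemma: it is dispatched with a citation to \cite[Proposition~2]{EM:04} and \cite{khlebnikov2008petersen}. What you have written is, in substance, a reconstruction of the standard argument behind that cited result: reduce the robust matrix inequality to the scalar condition $\ma{z}^H\ma{Y}\ma{z}\ge 2\zeta\twonorm{\ma{P}\ma{z}}\twonorm{\ma{Q}\ma{z}}$ for all $\ma{z}$ (the worst-case $\ma{X}$ being rank one, which is also why the Frobenius-ball and spectral-ball versions of the lemma coincide), linearize the product of norms with an auxiliary vector $\ma{u}$, and invoke losslessness of the S-procedure for a single Hermitian quadratic constraint; your assembly of the coefficient matrices of $f-\lambda g$ does yield exactly the block LMI in the statement. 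So the route is sound and is essentially the proof that the paper outsources, rather than a genuinely different one.

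Two concrete caveats. First, there is a sign slip in your sufficiency step. Testing the block matrix against $[\ma{z}^H,\ma{w}^H]^H$ with $\ma{w}=\zeta^{-1}\ma{X}\ma{Q}\ma{z}$ gives
\begin{align}
0 \;\le\; \ma{z}^H\left(\ma{Y}-\ma{P}^H\ma{X}\ma{Q}-\ma{Q}^H\ma{X}^H\ma{P}\right)\ma{z} \;-\; \lambda\,\ma{z}^H\ma{Q}^H\left(\ma{I}-\zeta^{-2}\ma{X}^H\ma{X}\right)\ma{Q}\ma{z}, \nonumber
\end{align}
i.e., the expansion equals the desired quantity \emph{minus} the nonnegative residual, not plus it. As literally written (``desired plus a nonnegative residual is nonnegative, hence desired is nonnegative'') the inference does not go through; the correct reading is that the desired quantity dominates the residual and is therefore nonnegative. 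The fix is a one-line rearrangement, but it must be stated with the right sign. Second, the degenerate case $\ma{Q}=\ma{0}$ is not a constraint-qualification technicality that ``direct inspection or a continuity argument on $\zeta$'' can absorb: the equivalence is genuinely false there. Take $\ma{Y}=\ma{0}$, $\ma{P}\neq\ma{0}$, $\ma{Q}=\ma{0}$, $\zeta>0$. The left-hand condition reads $\ma{0}\succeq\ma{0}$ and holds for all admissible $\ma{X}$, yet for every $\lambda\ge 0$ the matrix $\left[\begin{array}{cc} \ma{0} & -\zeta\ma{P}^H \\ -\zeta\ma{P} & \lambda\ma{I}\end{array}\right]$ has a $2\times 2$ principal minor equal to $-\zeta^2|P_{ji}|^2<0$ for any nonzero entry $P_{ji}$, so it is never positive semidefinite. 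The lemma as stated (for ``arbitrary matrices'') thus implicitly assumes $\ma{Q}\neq\ma{0}$, or some kernel-compatibility condition between $\ma{P}$ and $\ma{Y}$ — an assumption inherited from the cited literature, where the Slater point for the S-lemma exists. Your proof should state this hypothesis explicitly rather than promise to settle the degenerate case later, because no argument can settle it in the affirmative.
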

\vspace{-2mm} \begin{proof}
See~\cite[Proposition~2]{EM:04}, \cite{khlebnikov2008petersen}.
\end{proof}
By choosing the matrices in Lemma~\ref{petersen} such that $\ma{X} = {\ma{b}}_{ij}^k$, $\ma{Q} = \left[ -1, \; \ma{0}_{1 \times \tilde{d}_{ij} }  \right]$ and  
\begin{align} \label{CSI_Error_ChoosingMatrices_Lemma_Peterson}
\ma{Y} =  \left[\begin{array}{cc} \tau_{ij}^k & {{\ma{c}}_{ij}^k}{}^H \\   {\ma{c}}_{ij}^k & \ma{I}_{\tilde{d}_{ij}} \end{array} \right], \ma{P} = \left[\begin{array}{cc} \ma{0}_{{M_i N_j} \times 1 }, \;\; {\tilde{\ma{D}}_{ij}^k}{}^H {{\ma{C}}_{ij}^k}{}^H \end{array} \right], \nonumber
\end{align} 
the optimization problem in (\ref{global_opt_problem_MWMSE_CSIError}) is equivalently written as  
\begin{subequations}  \label{Prob:MMSE_CSI_Error_final}
\begin{align}
\underset{\mathbb{V},\mathbb{U}, \mathbb{T}, \mathbb{M}}{ \text{min}}  \;\;\; & \sum_{i,j \in \mathbb{I}} \sum_{k \in \mathbb{F}_K} \tau_{ij}^k  \\
{\text{s.t.}} \;\;\;\;\;\;     &  {\ma{F}}_{i,j}^k \succeq 0,  \; \ma{G}_i \succeq 0 , \;\;  \forall i,j \in \mathbb{I},\;k \in \mathbb{F}_K, 
\end{align} 
\end{subequations}
where $\mathbb{M} :=\{\lambda_{ij}^k, \; \forall i,j \in \mathbb{I}, k \in \mathbb{F}_K \}$, and
\begin{align} 
\ma{G}_{i} :&= \left[\hspace{-0mm}\begin{array}{cc} P_i &\hspace{-0mm}  \tilde{\ma{v}}_i^H  \\ \tilde{\ma{v}}_i  & \hspace{-0mm}\ma{I}_{} \end{array} \hspace{-0mm}\right], \; \tilde{\ma{v}}_i  := \left\lfloor \text{vec} \left( \left( \ma{I} + K\ma{\Theta}_{\text{tx},i} \right)^{\frac{1}{2}} \ma{V}_{i}^k \right) \right\rfloor_{k\in \mathbb{F}_K} , \nonumber
\end{align}
\begin{align}
{\ma{F}}_{i,j}^k :&= \left[\begin{array}{ccc} \tau_{ij}^k - \lambda_{ij}^k & {{\ma{c}}_{ij}^k}{}^H &  \ma{0}_{1 \times M_i N_j }\\  {\ma{c}}_{ij}^k & \ma{I}_{\tilde{d}_{ij}}  &  - \zeta_{ij}^k {\ma{C}}_{ij}{}^k\tilde{\ma{D}}_{ij}^k \\  \ma{0}_{{M_i N_j \times 1 }} & - \zeta_{ij}^k {\tilde{\ma{D}}_{ij}^k}{}^H {{\ma{C}}_{ij}^k}{}^H  &   \lambda_{ij}^k \ma{I}_{M_i N_j} \end{array} \right]. \nonumber
\end{align} 
Similar to (\ref{eq:global_opt_problem_rate}), the obtained problem in (\ref{Prob:MMSE_CSI_Error_final}) is not a jointly, but a separately convex problem over $\mathbb{V}$ and $\mathbb{U}$, in each case when the other variables are fixed. In particular, the optimization over $\mathbb{V},\mathbb{T}, \mathbb{M}$ is cast as an SDP, assuming a fixed $\mathbb{U}$. Afterwards, the optimization over $\mathbb{U}, \mathbb{T}, \mathbb{M}$ is solved as an SDP, assuming a fixed $\mathbb{V}$. The described alternating steps are continued until a stable point is achieved, see Algorithm~\ref{alg:WMMSECSIError} for the detailed procedure.  
\vspace{-1mm}
\begin{algorithm}[H] 
 \small{	\begin{algorithmic}[1] 
\State{$\ell \leftarrow  {0}  ;  \;\;\;\; \text{(set iteration number to zero)}$}
\State{$\mathbb{V},\mathbb{U} \leftarrow  \text{similar~initialization~as~Algorithm~$1$}$}
\Repeat 
\State{$\ell \leftarrow  \ell + 1$}
\State{$\mathbb{V}, \mathbb{T}, \mathbb{M} \leftarrow \text{solve SDP (\ref{Prob:MMSE_CSI_Error_final}), with fixed} \; \mathbb{U}$}
\State{$\mathbb{U}, \mathbb{T}, \mathbb{M} \leftarrow \text{solve SDP (\ref{Prob:MMSE_CSI_Error_final}), with fixed} \; \mathbb{V}$}
\Until{$\text{a stable point, or maximum number of $\ell$ reached}$}

\State{\Return$\left\{ \mathbb{U},\mathbb{V} \right\}$}
  \end{algorithmic} } 
 \caption{\small{Alternating SDP (AltSDP) for worst-case MMSE design under CSI error.} } \label{alg:WMMSECSIError}
\end{algorithm}
\vspace{-3mm}
\subsection{WMMSE for sum rate maximization} \label{sec:Error_CSI_Rate}
Under the impact of CSI error, the worst-case rate maximization problem is written as 
\begin{subequations} \label{eq:model_optimization_Error_CSI_Rate}
\begin{align}
\underset{\mathbb{V}}{ \text{max}} \;\; \underset{\mathbb{C} }{ \text{min}} \;\; & \;\;\sum_{i \in \mathbb{I}} \sum_{k \in \mathbb{F}_K}  I_i^k   \\
{\text{s.t.}} \;\; &  \text{(\ref{eq:global_opt_problem_MWMSE_b})}, \;\; \ma{\Delta}_{ij}^{k} \in \mathbb{D}_{ij}^{k}, \;\;  \forall i,j \in \mathbb{I}, \;\;  k \in \mathbb{F}_K.  \label{eq:global_opt_problem_1_normbounded_contraint_Error_CSI_Rate}
\end{align} 
\end{subequations}
Via the application of Lemma~\ref{lemma_logdetE}, and (\ref{wmmse_wmmseEquivalent}) the rate maximization problem is equivalently written as 
\begin{subequations}  \label{eq:global_opt_problem_2_Error_CSI_Rate}
\begin{align}
\underset{\mathbb{V}}{ \text{max}} \;\; \underset{\mathbb{C}}{ \text{min}}  \;\; \underset{\mathbb{U}, \mathbb{W}}{ \text{max}} \;\;  & \sum_{i \in \mathbb{I}}   \sum_{k \in \mathbb{F}_K}  \bigg( \text{log} \left|\ma{W}_i^k{\ma{W}_i^k}^H\right|  \nonumber \\ & + d_i - \text{tr} \left({\ma{W}_i^k}^H \ma{E}_i^k\ma{W}_i^k\right) \bigg) \\
{\text{s.t.}} \;\;     & \text{(\ref{eq:global_opt_problem_1_normbounded_contraint_Error_CSI_Rate})}, \label{eq:global_opt_problem_2_constraints}
\end{align} 
\end{subequations}
where $\mathbb{W}:= \{\ma{W}_{i}^k, \; \forall i \in \mathbb{I}, \; \forall k \in \mathbb{F}_K\}$. The above problem is not tractable in the current form, due to the inner min-max structure. Following the max-min exchange introduced in \cite[Section~III]{JPKR:11}, and undertaking similar steps as in (\ref{quadratic_error_representation_final})-(\ref{CSI_Error_ChoosingMatrices_Lemma_Peterson}) the problem (\ref{eq:global_opt_problem_2_Error_CSI_Rate}) is turned into 
\begin{subequations}  \label{Prob:RATE_CSI_Error_final}
\begin{align}
\underset{\mathbb{V},\mathbb{U}, \mathbb{W}, \mathbb{T}, \mathbb{M} }{ \text{max}}  \;\;\; & \sum_{i \in \mathbb{I}} \sum_{k \in \mathbb{F}_K}  \bigg( 2 \text{log}\left|\ma{W}_i^k\right|  + d_i - \sum_{j \in \mathbb{I}} \tau_{ij}^k \bigg)  \\
{\text{s.t.}} \;\; \;\; \;\; \;\;     &  {\ma{F}}_{i,j}^k \succeq 0,  \; \ma{G}_i \succeq 0 , \;\;  \forall i,j \in \mathbb{I},\;k \in \mathbb{F}_K, 
\end{align} 
\end{subequations}
where ${\ma{F}}_{i,j}^k, \ma{G}_i$ are defined in (\ref{Prob:MMSE_CSI_Error_final}). It is observable that the transformed problem holds a separately, but not a jointly, convex structure over the optimization variable sets. In particular, the optimization over $\mathbb{V},\mathbb{T}, \mathbb{M}$ and $\mathbb{U},\mathbb{T}, \mathbb{M}$ are cast as SDP in each case when other variables are fixed. Moreover, the optimization over $\mathbb{W}$ can be efficiently implemented using the MAX-DET algorithm \cite{vandenberghe1998determinant}, see Algorithm~\ref{alg:WMMSECSIError_Rate}. Similar to Algorithm~\ref{alg:WMMSECSIError}, due to the monotonic increase of the objective in each optimization iteration the algorithm convergences to a stationary point. See \cite[Section~III]{JPKR:11} for arguments regarding convergence and optimization steps for a problem with a similar variable separation.   

%

\begin{algorithm}[H] 
 \small{	\begin{algorithmic}[1] 
\State{$\text{Algorithm~1, Steps 1-3} \;\; \text{(initialization)}$}


\Repeat 
\State{$\mathbb{W}, \mathbb{T}, \mathbb{M} \leftarrow \text{solve MAT-DET (\ref{Prob:MMSE_CSI_Error_final}), with fixed} \; \mathbb{V},\mathbb{U}$}
\State{$\text{Algorithm~3, Steps 4-6}$}
\Until{$\text{a stable point, or maximum number of $\ell$ reached}$}

\State{\Return$\left\{ \mathbb{U},\mathbb{V} \right\}$}
  \end{algorithmic} } 
 \caption{\small{AltSDP-WMMSE algorithm for worst-case rate maximization under CSI error} } \label{alg:WMMSECSIError_Rate}
\end{algorithm}  \vspace{-3mm}

\vspace{-4mm}

\subsection{Worst case CSI error} \label{WC_CSI}
It is beneficial to obtain the least favorable CSI error matrices, as they provide guidelines for the future channel estimation strategies. For instance, this helps us to choose a channel training sequence that reduces the radius of the CSI error feasible regions in the most destructive directions. Moreover, such knowledge is a necessary step for cutting-set-based methods \cite{mutapcic2009cutting} which aim to reduce the design complexity by iteratively identifying the most destructive error matrices and explicitly incorporating them into the future design steps. In the current setup, the worst-case channel error matrices are identified by maximizing the weighted MSE objective in (\ref{eq:global_opt_problem_MWMSE_CSIError}) within their defined feasible region. This is expressed as
\begin{subequations}  \label{wmmse_}
\begin{align}
\underset{{{\mathbb{C}}}}{ \text{max}} \;\;  & \sum_{i \in \mathbb{I}} \sum_{k\in\mathbb{F}_K}\text{tr}\left({\ma{W}_{i}^k}{}^H \ma{E}_i^k \ma{W}_{i}^k\right), \;\; \\
{\text{s.t.}} \;\;     &  \left\| \ma{D}_{ij}^k  \ma{\Delta}_{ij}^k \right\|_F \leq \zeta_{ij}^k, \;\; \forall i,j \in \mathbb{I},\; k \in \mathbb{F}_K.
\end{align} 
\end{subequations}
Due to the uncoupled nature of the error feasible set, and the value of the objective function over $\ma{\Delta}_{ij}^k$, following (\ref{quadratic_error_representation_final}), the above problem is decomposed as 
\begin{subequations} \label{find_worst_delta_3}
\begin{align} 
\underset{ {\ma{b}}_{ij}^k }{ \text{min}} \;\;  & - \left\| \ma{C}_{ij}^k \tilde{\ma{D}}_{ij}^k {\ma{b}}_{ij}^k \right\|_2^2 - 2 \text{Re}\left\{ {{\ma{b}}_{ij}^k}{}^H   {{\tilde{\ma{D}}_{ij}^k}}{}^H {\ma{C}_{ij}^k}{}^H  {\ma{c}}_{ij}^k \right\} -  {{\ma{c}}_{ij}^k}{}^H {\ma{c}}_{ij}^k \label{wwmse_worstcaseerror_nonconvexquadraticobjective}\\ {\text{ s.t.}} \;\;      & {{\ma{b}}_{ij}^k}{}^H {\ma{b}}_{ij}^k  \leq {\zeta_{ij}^k}{}^2, 
\end{align} 
\end{subequations}
where $\text{Re}\{\cdot\}$ represents the real part of a complex value. Note that the objective in (\ref{wwmse_worstcaseerror_nonconvexquadraticobjective}) is a non convex function and can not be minimized using the usual numerical solvers in the current form. Following the zero duality gap results for the non-convex quadratic problems \cite{zheng2012zero}, we focus on the dual function of (\ref{find_worst_delta_3}). The corresponding Lagrangian function to (\ref{find_worst_delta_3}) is constructed as 
\begin{align}  \label{find_worst_delta_Lagrangian}
& \mathcal{L}\left({\ma{b}}_{ij}^k , \rho_{ij}^k \right)  =  \nonumber  \\  &  {{\ma{b}}_{ij}^k}{}^H  \ma{A}_{ij}^k {\ma{b}}_{ij}^k  - 2 \text{Re}\left\{ {{\ma{b}}_{ij}^k}{}^H   {\tilde{\ma{D}}_{ij}^k}{}^H {\ma{C}_{ij}^k}{}^H  {\ma{c}}_{ij}^k \right\} -  {{\ma{c}}_{ij}^k}{}^H {\ma{c}}_{ij}^k - \rho_{ij}^k  {\zeta_{ij}^k}{}^2, 
\end{align} 
where $\rho_{ij}^k$ is the dual variable and 
\begin{align} 
\ma{A}_{ij}^k := \rho_{ij}^k \ma{I}_{N_jM_i} - {\tilde{\ma{D}}_{ij}^k}{}^H{{\ma{C}}_{ij}^k}{}^H{\ma{C}}_{ij}{}^k\tilde{\ma{D}}_{ij}^k.
\end{align} 
Consequently, the value of the dual function is obtained as 
\begin{align}
& \ma{g} \left( \rho_{ij}^k \right) = \nonumber \\ & - {{{\ma{c}}}_{ij}^k}{}^H  \ma{C}_{ij}^k  \tilde{\ma{D}}_{ij}^k \left(\ma{A}_{ij}^k\right)^{-1} {\tilde{\ma{D}}_{ij}^k}{}^H {\ma{C}_{ij}^k}{}^H  {{\ma{c}}}_{ij}^k  - {{\ma{c}}_{ij}^k}{}^H{\ma{c}}_{ij}^k - \rho_{ij}^k {\zeta_{ij}^k}{}^2, \nonumber
\end{align}
if $ \ma{A}_{ij}^k \succeq 0$, and ${\tilde{\ma{D}}_{ij}^k}{}^H {\ma{C}_{ij}^k}{}^H  {{\ma{c}}}_{ij}^k \in \mathcal{R}\{\ma{A}_{ij}^k\}$, and otherwise is unbounded from below\footnote{If one of the aforementioned conditions is not satisfied, an infinitely large value of $\ma{b}_{ij}$ can be chosen in the negative direction of $\ma{A}_{ij}$, if $\ma{A}_{ij}^k$ is not positive semi-definite, or in the direction ${\tilde{\ma{D}}_{ij}^k}{}^H {\ma{C}_{ij}^k}{}^H  {\ma{c}}_{ij}^k$ within the null-space of $\ma{A}_{ij}^k$.}. By applying the Schur complement lemma, the maximization of the dual function is written using the epigraph form as
\begin{subequations} \label{eq:dual_channelerrormatrices}
\begin{align}
\underset{ {\rho}_{ij}^k \geq 0 , \; \phi_{ij}^k} { \text{max} } \;\;  & - \phi_{ij}^k \\  
{\rm s.t.} \;\; & \left[\begin{array}{cc} \phi_{ij}^k - {{\ma{c}}_{ij}^k}{}^H {\ma{c}}_{ij}^k  - {\rho}_{ij}^k {\zeta_{ij}^k}{}^2   & {{{\ma{c}}}_{ij}^k}{}^H  \ma{C}_{ij}^k  \tilde{\ma{D}}_{ij}^k  \\ {\tilde{\ma{D}}_{ij}^k}{}^H {\ma{C}_{ij}^k}{}^H  {{\ma{c}}}_{ij}^k   & \ma{A}_{ij}^k  \end{array} \right] \succeq 0,  \label{eq:dual_channelerrormatrices_schur_semidefiniteconstraint}
\end{align} 
\end{subequations}
where $\phi_{ij}^k \in \real $ is an auxiliary variable\footnote{Note that the semi-definite presentation in (\ref{eq:dual_channelerrormatrices_schur_semidefiniteconstraint}) automatically satisfies  $ \ma{A}_{ij}^k \succeq 0$, and ${\tilde{\ma{D}}_{ij}^k}{}^H {\ma{C}_{ij}^k}{}^H  {{\ma{c}}}_{ij}^k \in \mathcal{R}\{\ma{A}_{ij}^k\}$.}. By plugging the obtained dual variable ${\rho}_{ij}^k$ into (\ref{find_worst_delta_Lagrangian}), and considering the fact that $- {\tilde{\ma{D}}_{ij}^k}{}^H {\ma{C}_{ij}^k}{}^H \ma{C}_{ij}^k \tilde{\ma{D}}_{ij}^k + {\rho_{ij}^k}{}^\star \ma{I}_{N_jM_i} \succeq 0$ as a result of (\ref{eq:dual_channelerrormatrices}), the optimal value of $\ma{b}_{ij}^k$ is obtained from (\ref{find_worst_delta_Lagrangian}) as 
\begin{align}
{\ma{b}_{ij}^k}{}^\star = \left( - {\tilde{\ma{D}}_{ij}^k}{}^H {\ma{C}_{ij}^k}{}^H \ma{C}_{ij}^k \tilde{\ma{D}}_{ij}^k + {\rho_{ij}^k}{}^\star \ma{I}_{N_jM_i} \right)^{-1} {\tilde{\ma{D}}_{ij}^k}{}^H {\ma{C}_{ij}^k}{}^H  {{\ma{c}}}_{ij}^k, \nonumber
\end{align} 
where $(\cdot)^\star$ represents the optimality and the worst case $\ma{\Delta}_{ij}^k$ is consequently calculated via $\text{vec}(\ma{\Delta}_{ij}^k) = \tilde{\ma{D}}_{ij}^k {\ma{b}_{ij}^k}{}^\star$.

\subsection{Computational complexity} \label{alg_complexity}
The proposed designs in Section~\ref{sec:WMMSE} and \ref{sec:WMMSE_CSI_Error} are based on the alternative design of the optimization variables. Furthermore, it is observed that the consideration of non-linear hardware distortions, leading to inter-carrier leakage, as well as the impact of CSI error, result in a higher problem dimension and thereby complicate the structure of the resulting optimization problem. In this part, we analyze the arithmetic complexity associated with the Algorithm~\ref{sec:WMMSE_CSI_Error}. Note that Algorithm~\ref{sec:WMMSE_CSI_Error} is considered as a general framework, containing Algorithm~\ref{sec:WMMSE} as a special case, since it takes into account the impacts of hardware distortion jointly with CSI error. \par
The optimization over $\mathbb{V}, \mathbb{U}$ are separately cast as SDP. A general SDP problem is defined as 
\begin{align}
\underset{\ma{z}}{\text{min}} \;\; \ma{p}^T \ma{z}, \;\; {\text{s.t.}} \;\;  \ma{z}\in \real^n, \; \ma{Y}_0 + \sum_{i=1}^n z_i \ma{Y}_i \succeq 0, \; \|\ma{z}\|_2 \leq q,  \nonumber 
\end{align}
where the fixed matrices $\ma{Y}_i$ are symmetric block-diagonal, with $M$ diagonal blocks of the sizes $l_m \times l_m,\; m \in \mathbb{F}_M$, and define the specific problem structure, see \cite[Subsection~4.6.3]{ben2001lectures}. The arithmetic complexity of obtaining an $\epsilon$-solution to the defined problem, i.e., the convergence to the $\epsilon$-distance vicinity of the optimum is upper-bounded by 
\begin{align}
\mathcal{O}(1) \left(1 + \sum_{m=1}^{M} l_m  \right)^{\frac{1}{2}} \left( n^3 + n^2 \sum_{m=1}^{M} l_m^2 + n \sum_{m=1}^M l_m^3\right) \text{digit}\left( \epsilon\right), \nonumber
\end{align}
where \rev{$\mathcal{O}(1)$ is a positive constant and invariant to the problem dimensions~\cite{ben2001lectures}}, and $\text{digit}(\epsilon)$ is obtained from \cite[Subsection~4.1.2]{ben2001lectures} and affected by the required solution precision. The required computation of each step is hence determined by size of the variable space and the corresponding block diagonal matrix structure, which is obtained in the following:   

\subsubsection{Optimization over $\mathbb{V},\mathbb{T},\mathbb{M}$}
The size of the variable space is given as $n= 2K \left( 4 + \sum_{i\in\mathbb{I}} d_iN_i  \right)$. Moreover, the block sizes are calculated as $l_m = 2 + 2 K d_i N_i, \; \forall i \in \mathbb{I}$, corresponding to the semi-definite constraint on $\ma{G}_i$, and as $l_m = 2+ 2\tilde{d}_{ij} + 2M_iN_j, \; \forall i,j \in \mathbb{I},k\in\mathbb{F}_K$, corresponding to the semidefinite constraint on $\ma{F}_{i,j}^k$ from (\ref{Prob:MMSE_CSI_Error_final}). The overall number of the blocks is calculated as $M=2+4K$.

\subsubsection{Optimization over $\mathbb{U},\mathbb{T},\mathbb{M}$}
The size of the variable space is given as $n= 2K \left( 4 + \sum_{i\in\mathbb{I}} d_i M_i  \right)$. The block sizes are calculated as $l_m = 2+ 2\tilde{d}_{ij} + 2M_iN_j, \; \forall i,j \in \mathbb{I},k\in\mathbb{F}_K$, corresponding to the semidefinite constraint on $\ma{F}_{i,j}^k$ from (\ref{Prob:MMSE_CSI_Error_final}). The overall number of the blocks is calculated as $M=4K$.
%
%
\subsubsection{Remarks}
The above analysis intends to show how the bounds on computational complexity are related to different dimensions in the problem structure. Nevertheless, the actual computational load may vary in practice, due to the structure simplifications and depending on the used numerical solver. Furthermore, the overall algorithm complexity also depends on the number of optimization iterations required for convergence. See Subsection~\ref{sec_AlgorithmAnalysis} for a study on the convergence behavior, as well as a numerical evaluation of the algorithm computational complexity.

%



\vspace{-2mm}
\section {Simulation Results}\label{sec:simulations}
 
\begin{figure}[!t] 
    \begin{center}
        \includegraphics[angle=0,width=\MainFigureSizes]{./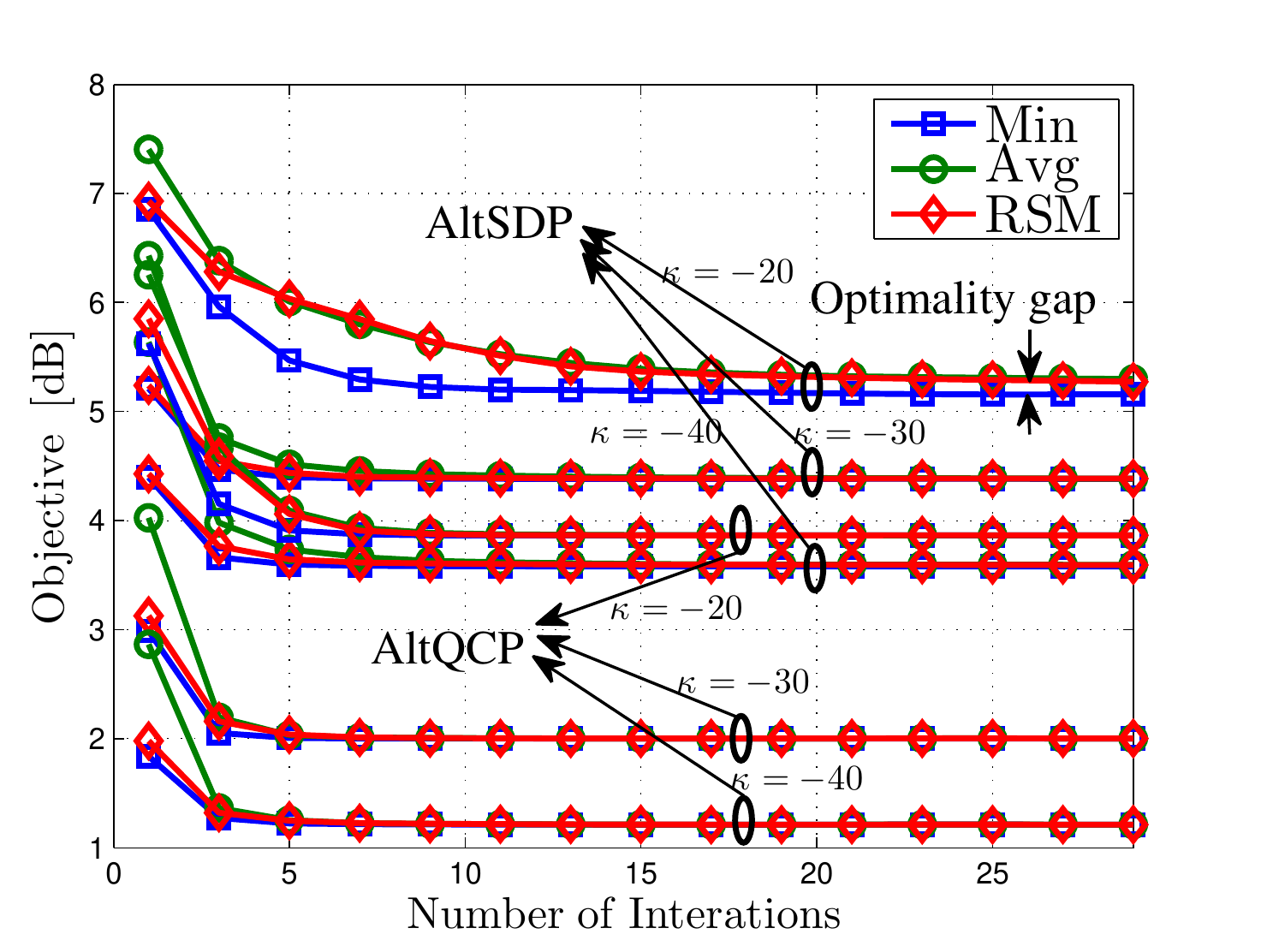}
    \caption{Average convergence behavior for AltQCP and AltSDP algorithms. AltQCP converges with fewer steps, and leads to a smaller optimality gap compared to AltSDP. Both algorithms converge in $10$-$30$ iterations.} \label{fig:init}
    \end{center} \vspace{-0mm} 
\end{figure}
\begin{figure}[t]  
\subfigure[CPU Time vs $M$]{\includegraphics[width = 0.48\columnwidth]{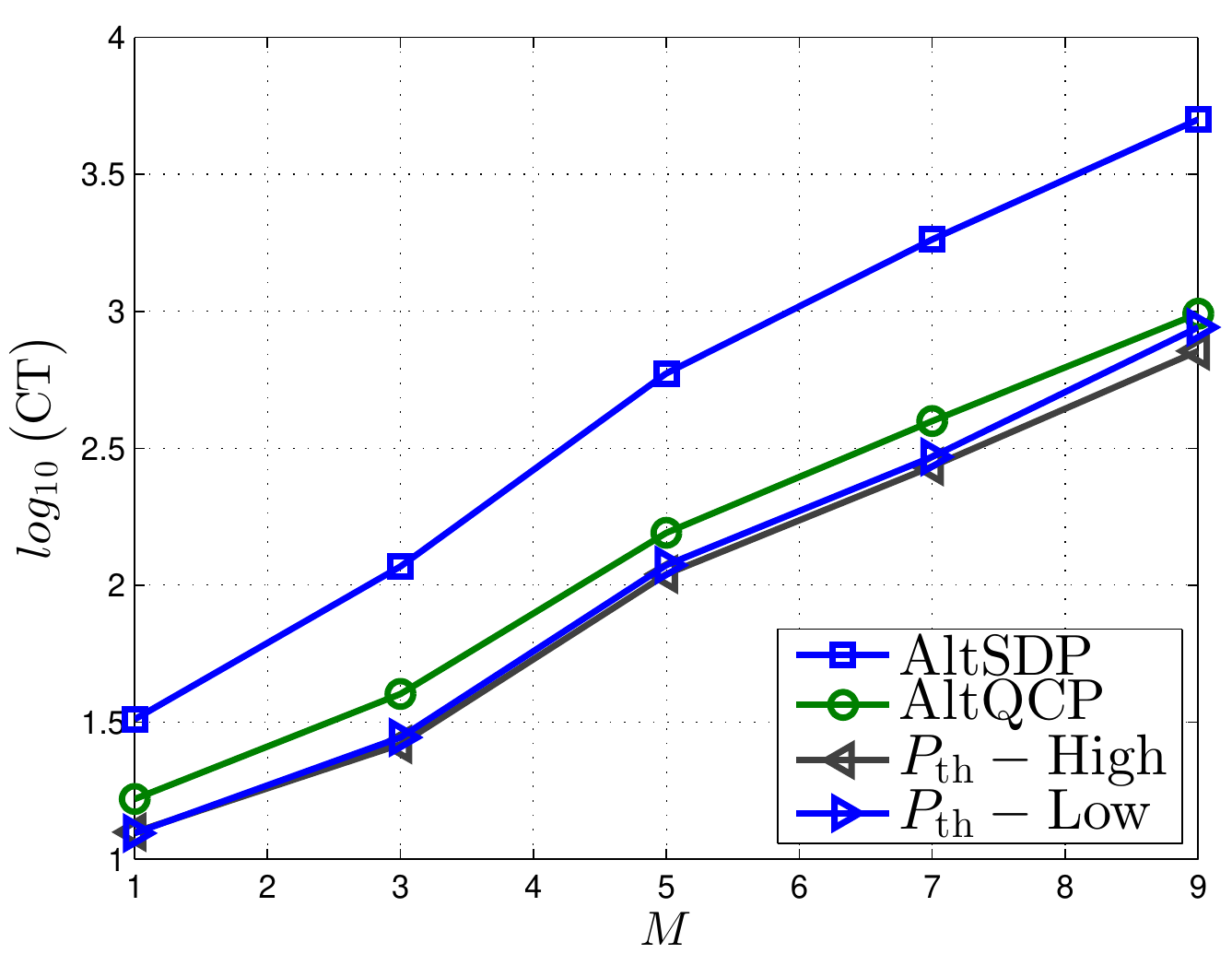}} \label{fig_conv2}
\subfigure[CPU Time vs $K$]{\includegraphics[width = 0.48\columnwidth]{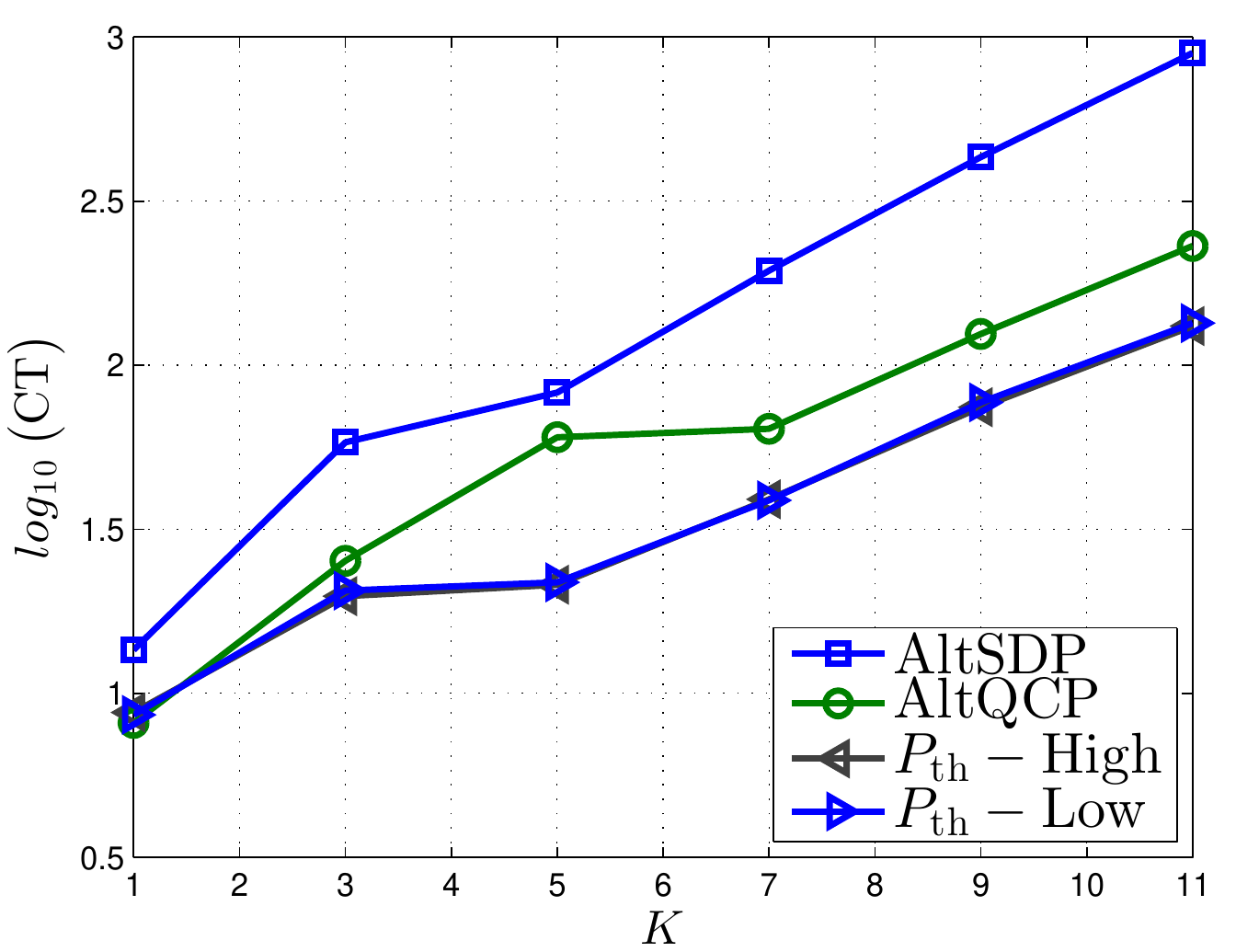}} \label{fig_conv3}
\caption{Comparison of the algorithm computational complexities, in terms of the required CPU time (CT), for different system dimensions, i.e., different $K$ and $M$. \rev{$\kappa$ represents the hardware inaccuracy, i.e., $\ma{\Theta}_{\text{rx},i} = \kappa \ma{I}_{M_{i}}, \ma{\Theta}_{\text{tx},i} = \kappa \ma{I}_{N_{i}}$}.} 
\label{fig:Complexity}
\vspace{-7pt}
\end{figure} 

\begin{figure*}[!h]  
\hspace{0.0cm} \subfigure[MSE vs. hardware inaccuracy]{\includegraphics[width = 0.666\columnwidth]{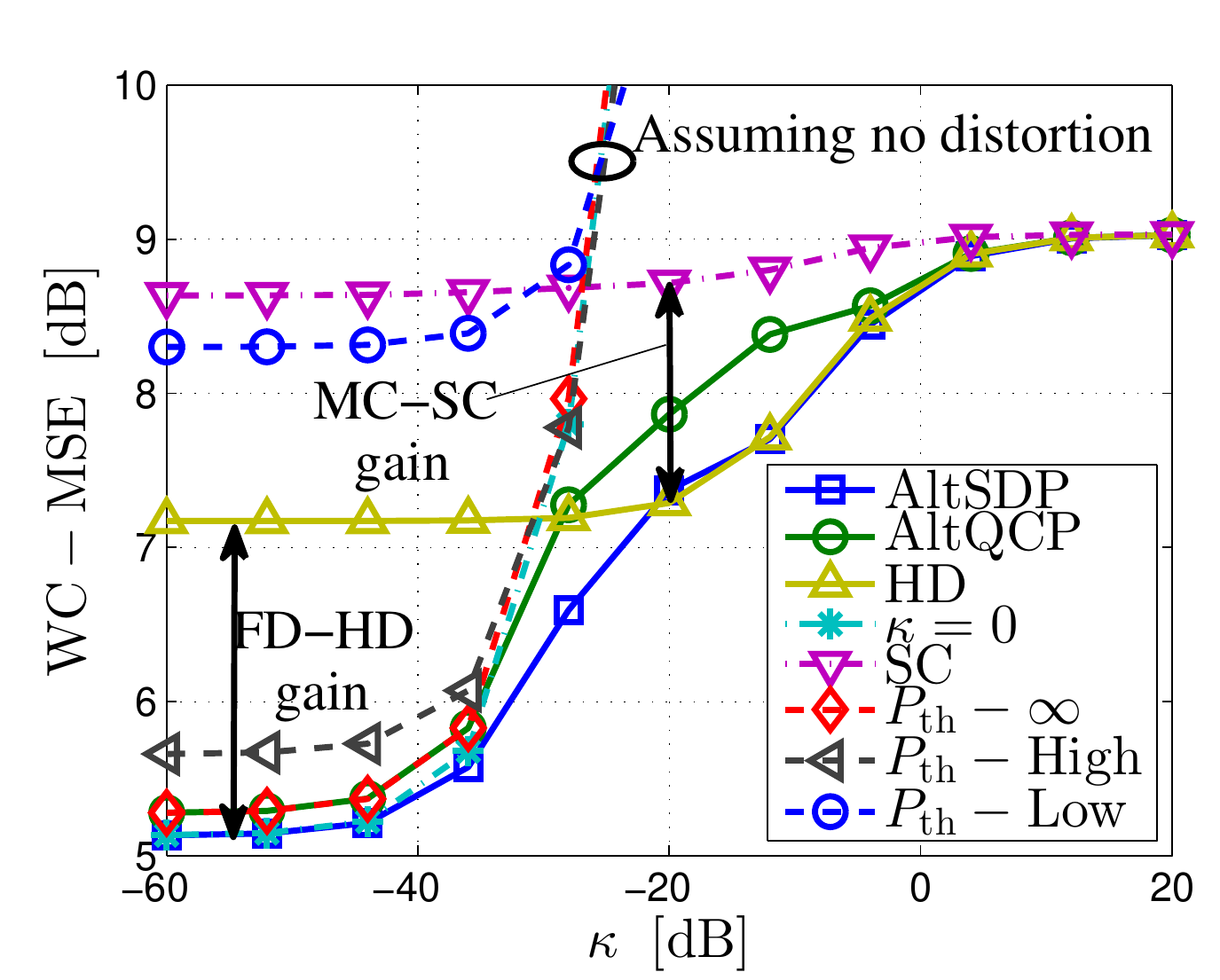}} \label{fig_sr_kappa}
\subfigure[MSE vs. CSI error ]{\includegraphics[ width = 0.666\columnwidth]{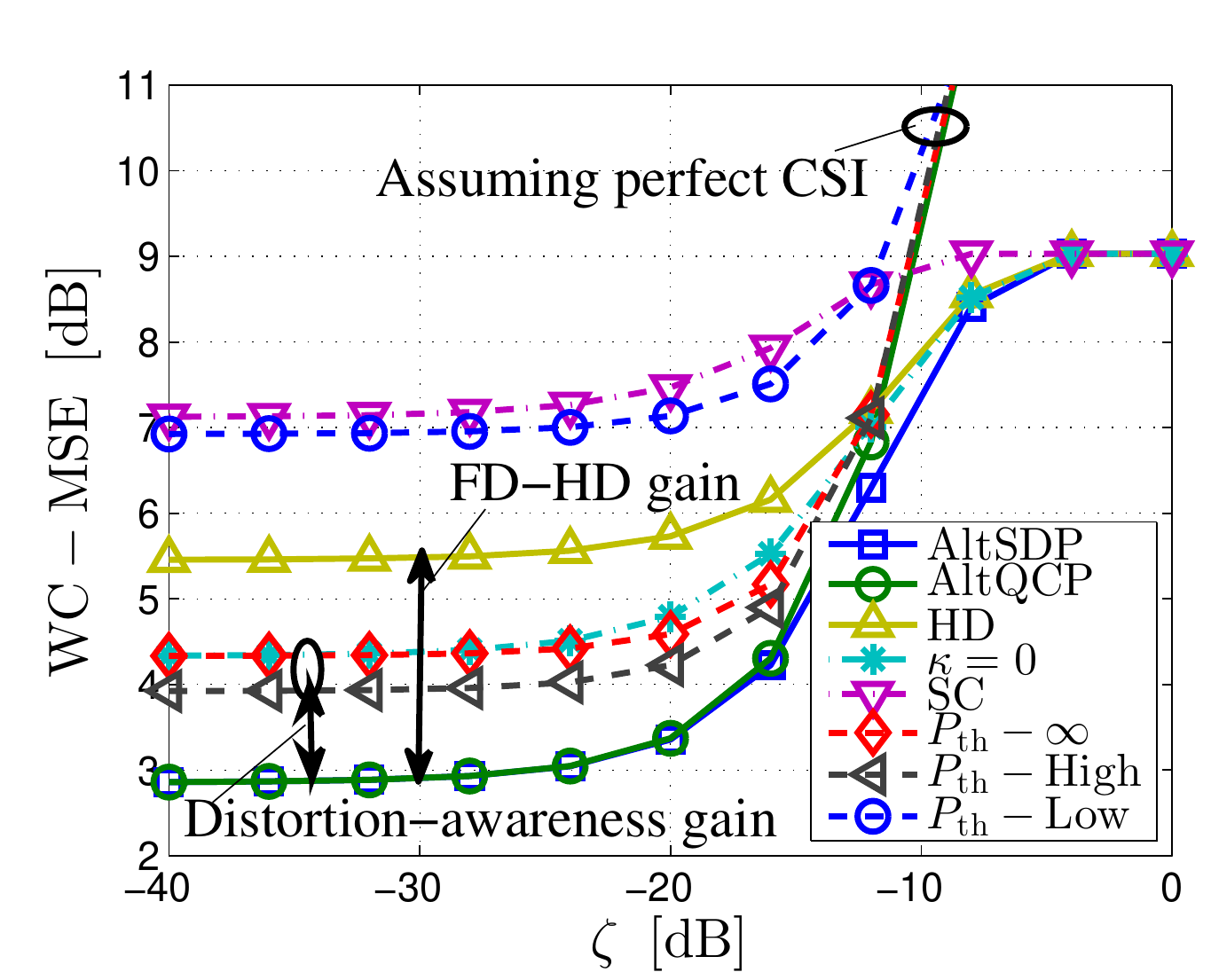}} \label{fig_sr_noise}
\subfigure[MSE vs. noise variance]{\includegraphics[width = 0.666\columnwidth]{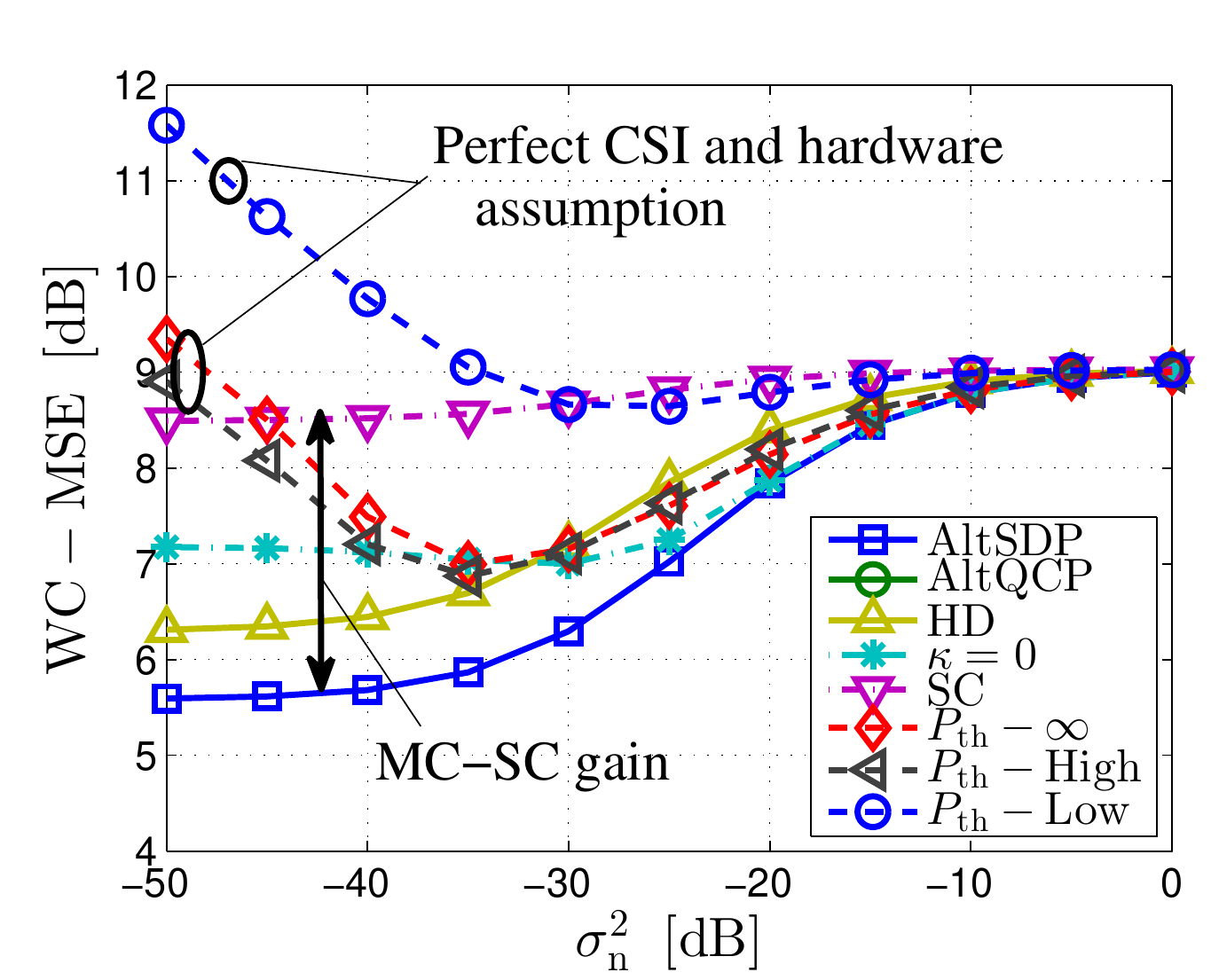}} \label{fig_sr_power}
\vspace{-3mm}\subfigure[MSE vs. comm. channel strength]{\includegraphics[width = 0.666\columnwidth]{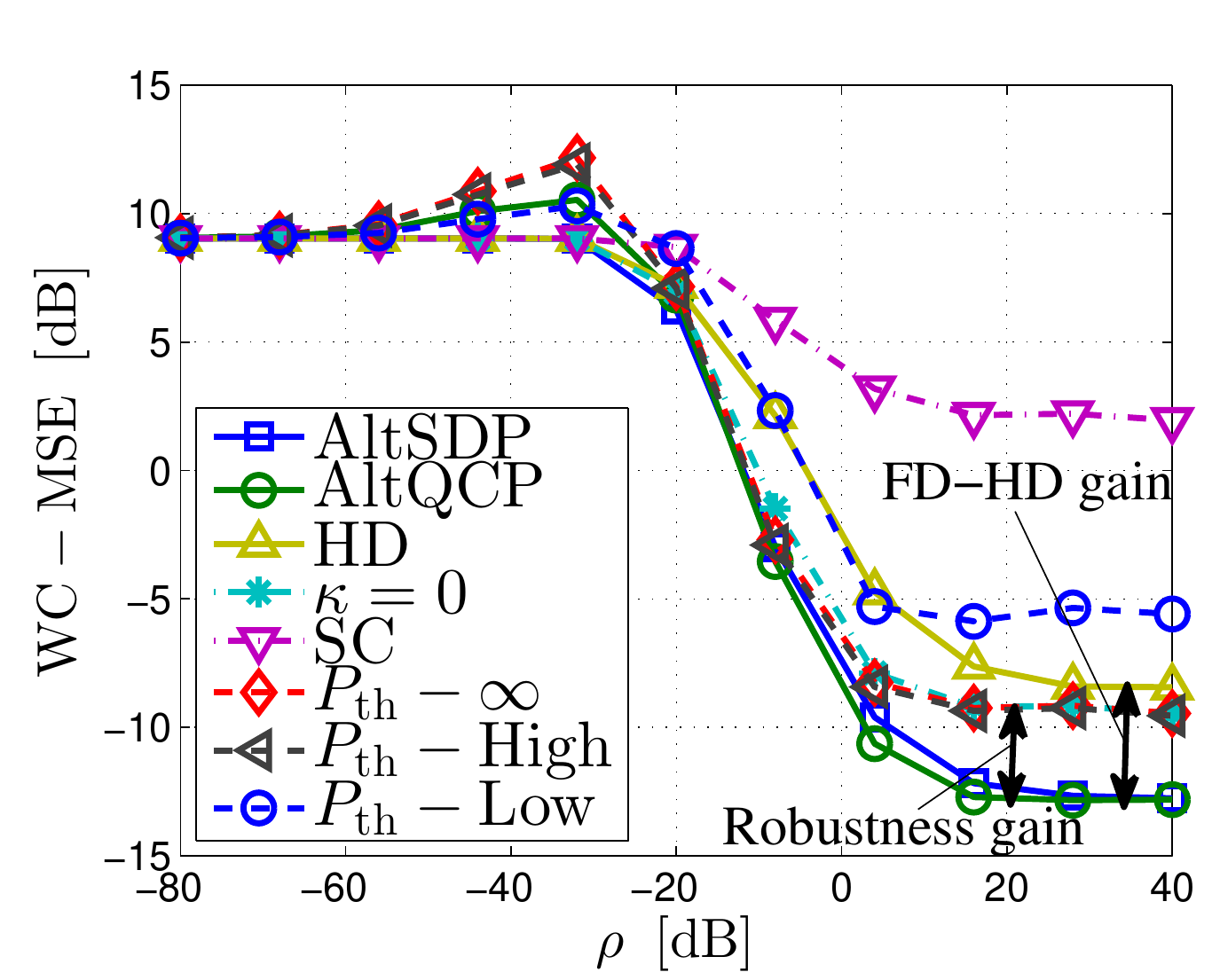}} \label{fig_sr_kappa}
\subfigure[SMSE vs. number of subcarriers]{\includegraphics[width = 0.666\columnwidth]{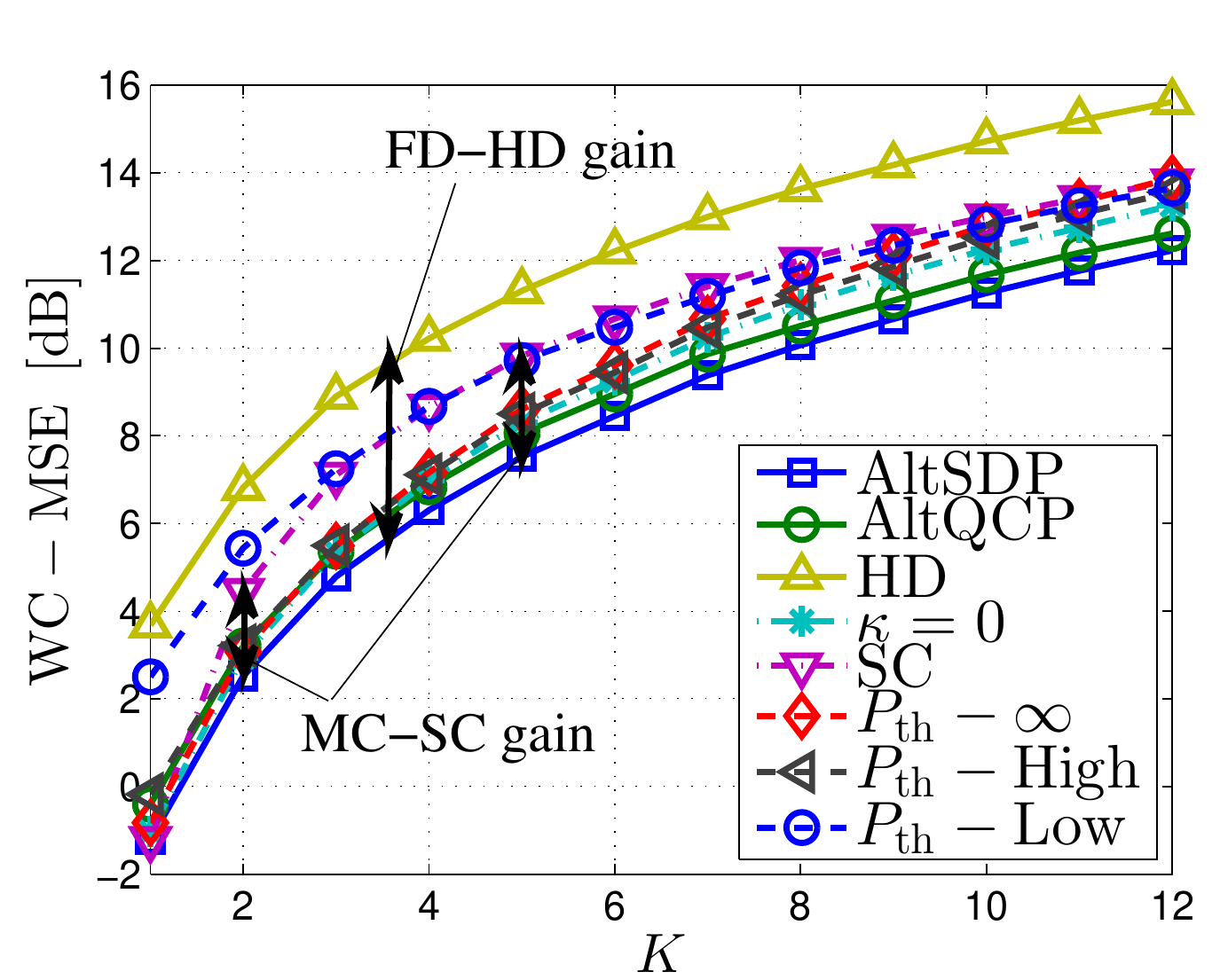}} \label{fig_sr_noise}
\subfigure[MSE vs. number of subcarriers]{\includegraphics[width = 0.666\columnwidth]{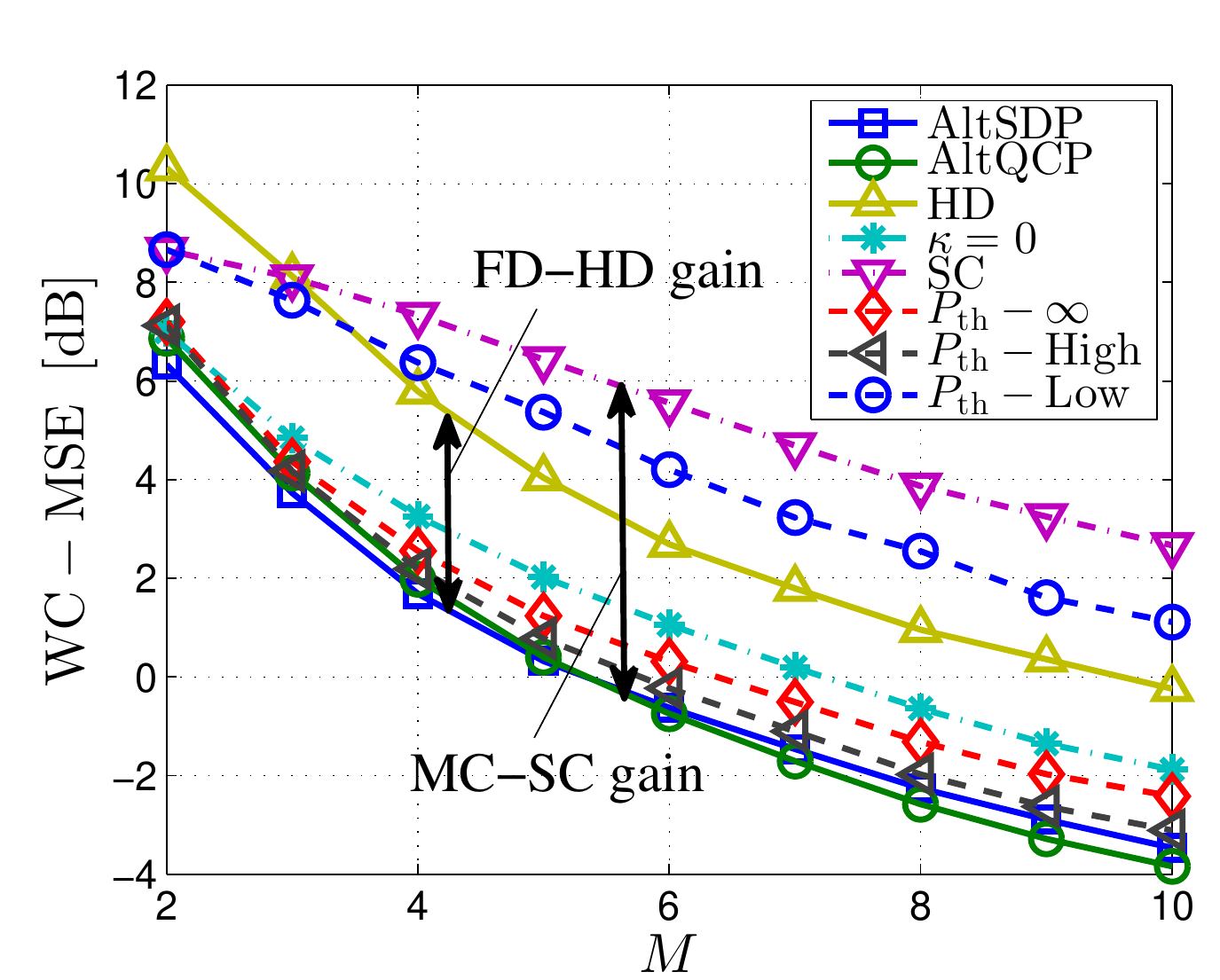}} 
\vspace{-3mm}\subfigure[MSE vs. distribution of quantization]{\includegraphics[width = 0.666\columnwidth]{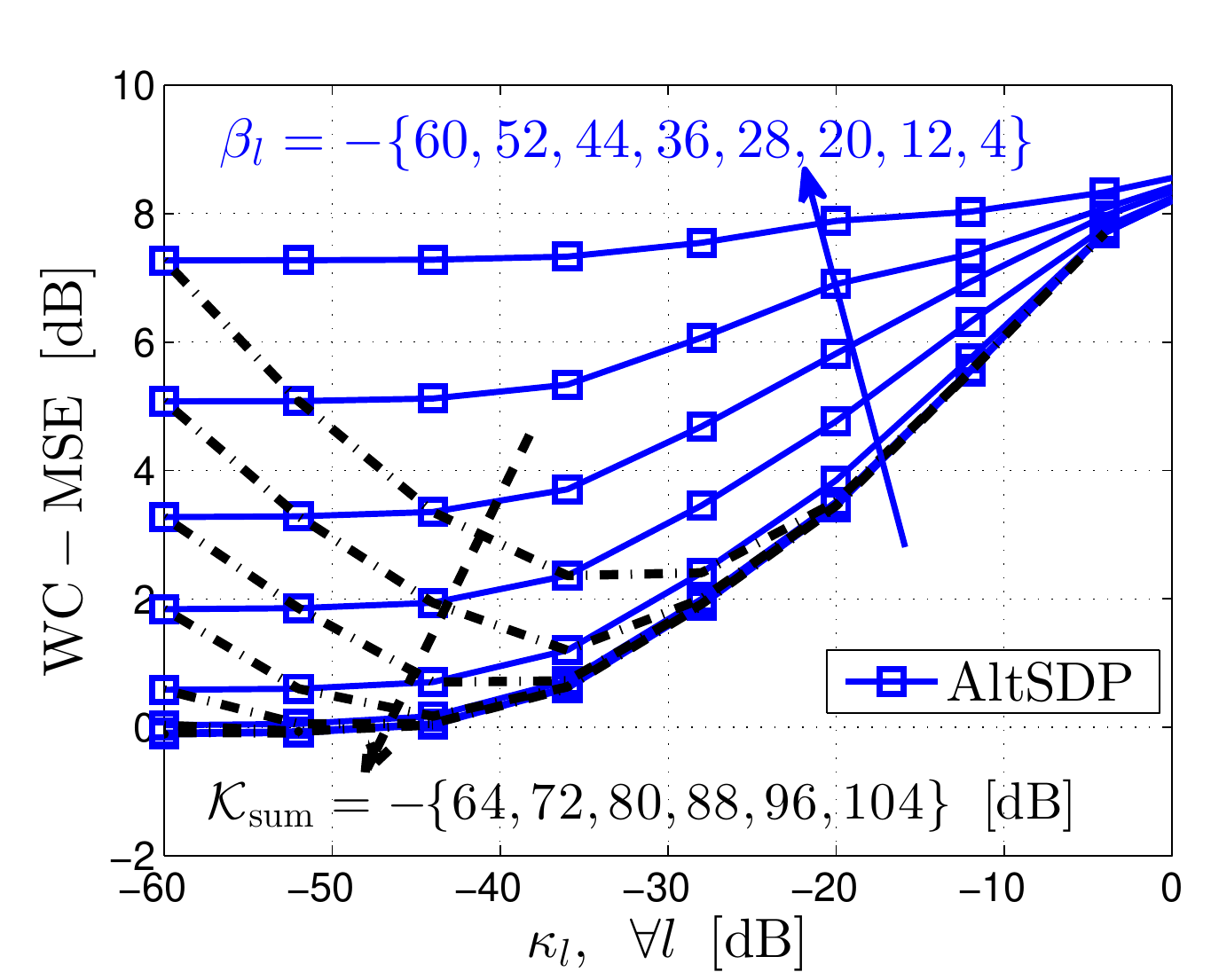}} 
\subfigure[MSE vs. distribution of antenna resources]{\includegraphics[width = 0.666\columnwidth]{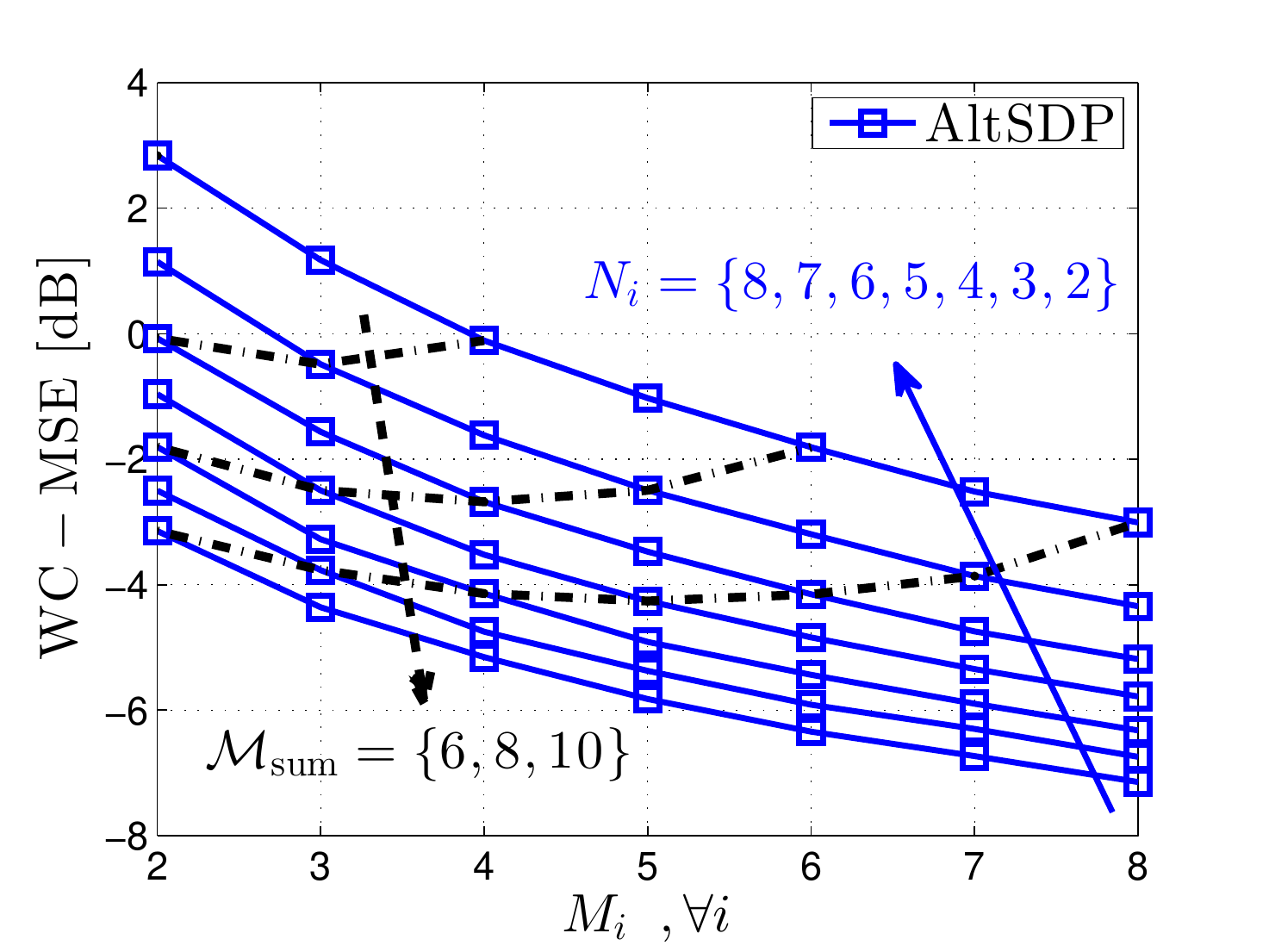}} 
\subfigure[Sum rate vs. hardware inaccuracy]{\includegraphics[width = 0.666\columnwidth]{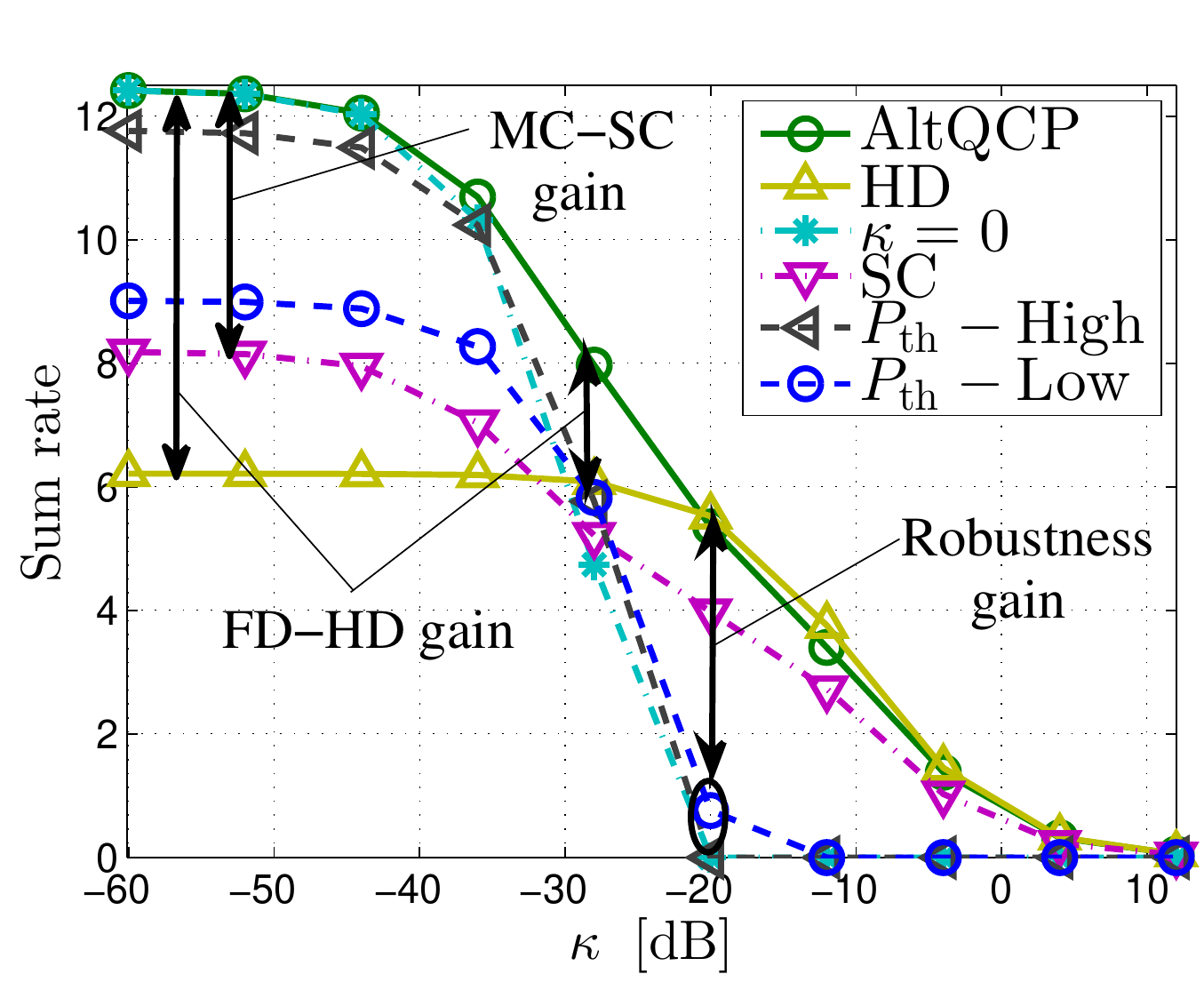}} \label{fig_sr_kappa}
\vspace{-3mm}\subfigure[Sum rate vs. noise variance]{\includegraphics[width = 0.666\columnwidth]{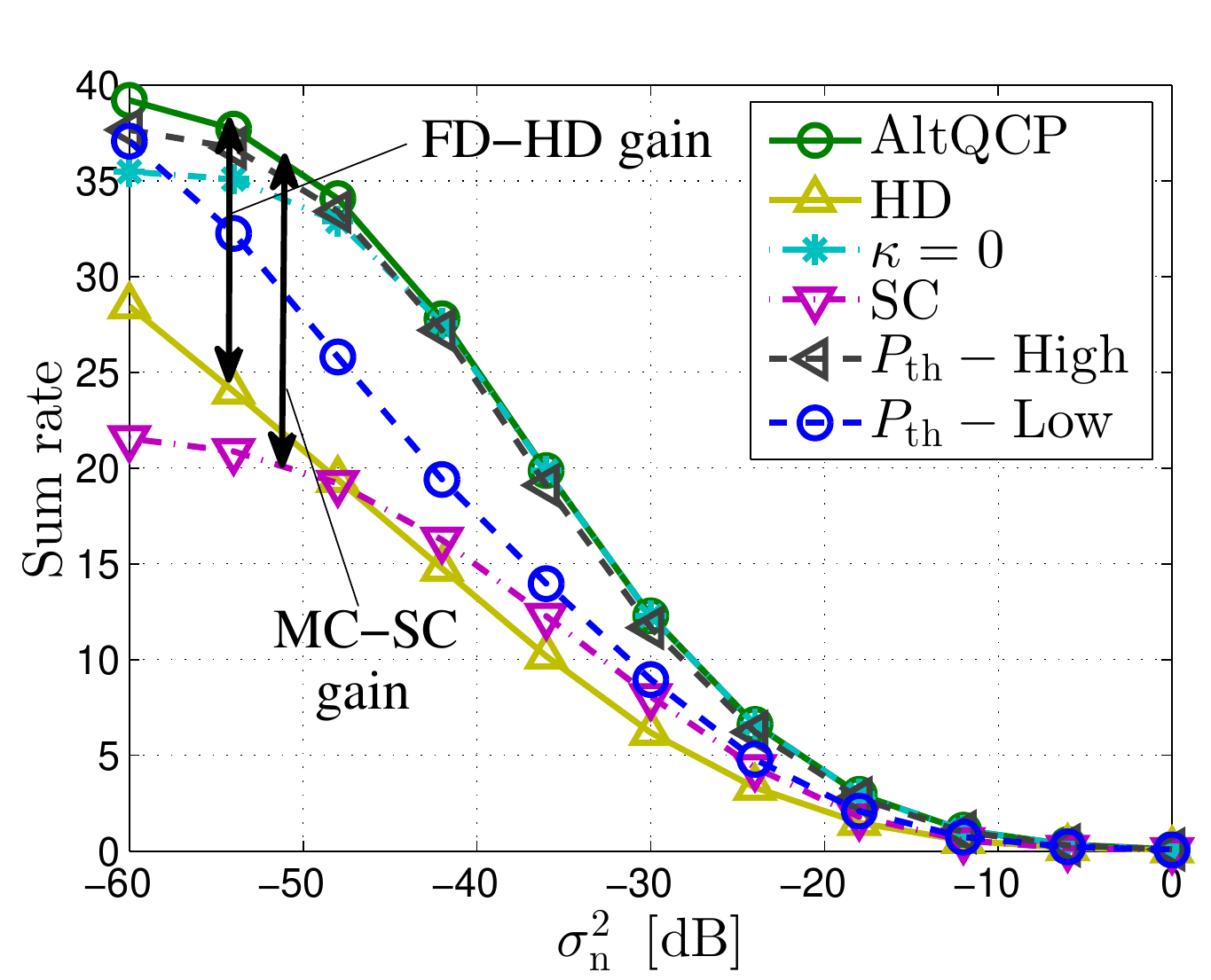}} \label{fig_sr_noise}
\subfigure[Sum rate vs. maximum transmit power]{\includegraphics[width = 0.666\columnwidth]{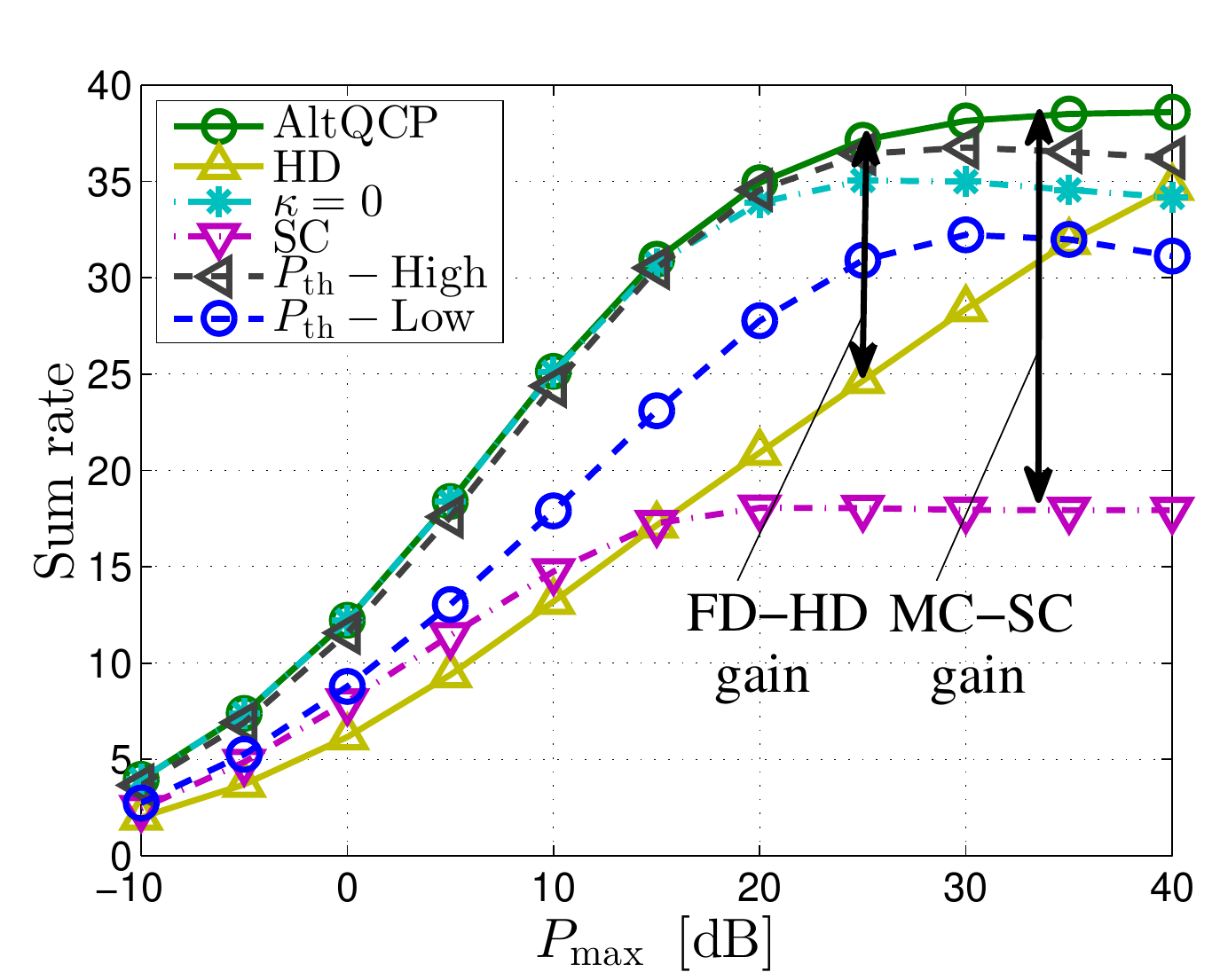}} \label{fig_sr_power}
\hspace{0.4cm}\subfigure[Sum rate vs. hardware inaccuracy]{\includegraphics[width = 0.666\columnwidth]{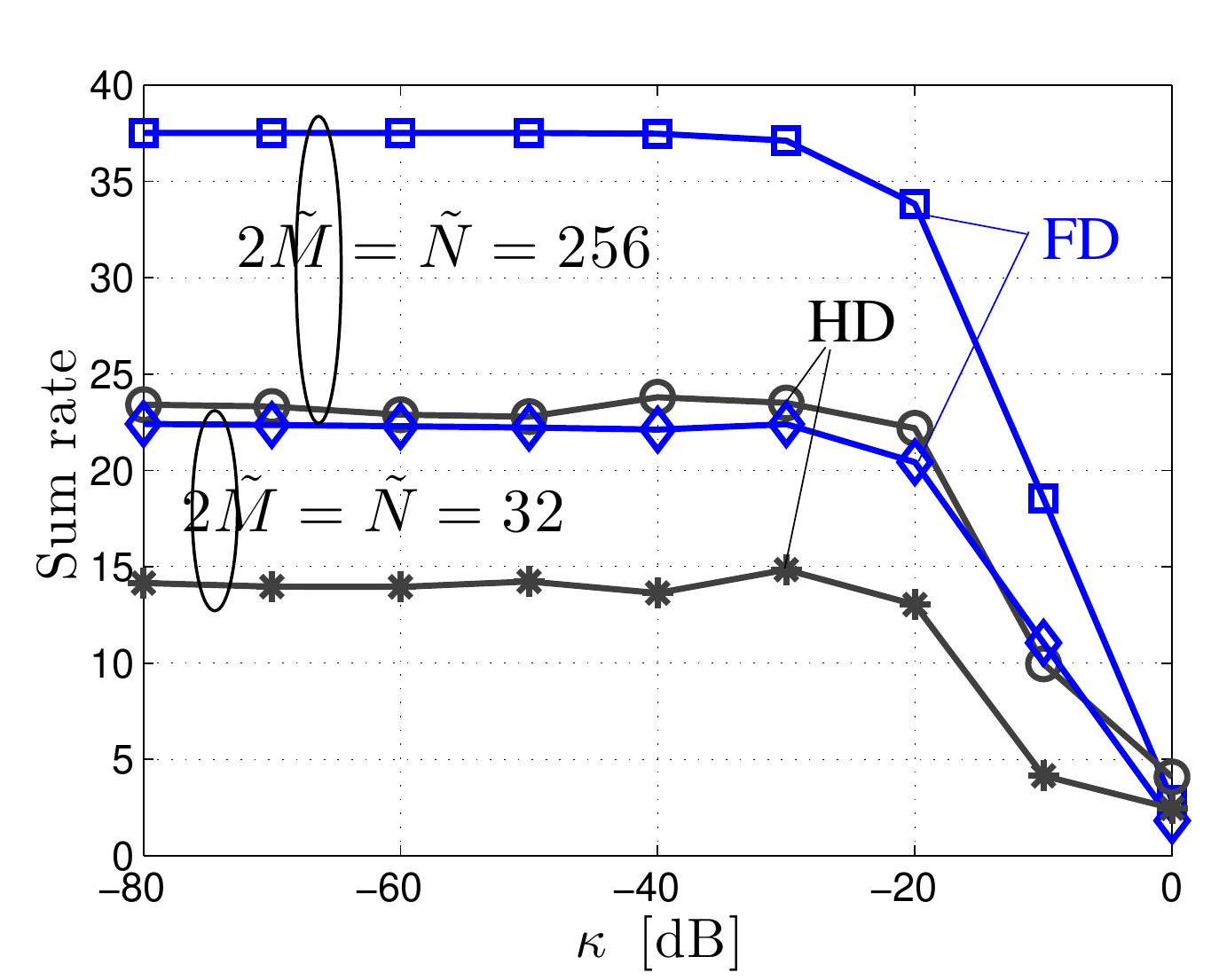}} 
\caption{System performance under various specifications. The application of a distortion-aware design is essential for a system with erroneous hardware or with a high signal-to-noise ratio (SNR). The default parameter set is $\kappa= - 50~\text{dB}$, $\sigma_{\text{n}}^2 := \sigma_{i,k}^2 = -30~\text{dB}$ for (i)-(l).  } \label{fig:MSE_sumrate}
\vspace{-0pt}
\end{figure*}
In this section we evaluate the behavior of the studied FD MC system via numerical simulations. In particular, we evaluate the proposed designs in Sections~\ref{sec:WMMSE} and~\ref{sec:WMMSE_CSI_Error} for various system situations, and under the impact of transceiver inaccuracy and CSI error. Communication channels $\ma{H}_{ii}^k$ follow an uncorrelated Rayleigh flat fading model with variance $\rho$. For the SI channel we follow the characterization reported in \cite{FD_ExperimentDrivenCharact}, \rev{indicating a Rician distribution for the SI channel}. In this respect we have $\ma{H}_{ij} \sim \mathcal{CN}\left( \sqrt{\frac{\rho_{\text{si}} K_R}{1+K_R}} \ma{H}_0 , \frac{\rho_{\text{si}}}{1+K_R} \ma{I}_{M_{i}} \otimes \ma{I}_{N_{j}} \right)$ where $\rho_{\text{si}}$ represents the SI channel strength, $\ma{H}_0$ is a deterministic term,\footnote{For simplicity, we choose $\ma{H}_0$ as a matrix of all-$1$ elements.} and $K_R$ is the Rician coefficient. For each channel realization, the resulting performance is evaluated by employing different design strategies and for various system parameters. The overall system performance is then averaged over $100$ channel realizations. Unless otherwise is stated, the following values are used to define our default setup: $K = 4$, $K_R = 10$, $M:=M_i=N_j=2$, $\rho = -20~\text{dB}$, $\rho_{\text{si}} = 1$, $\sigma_{\text{n}}^2 := \sigma_{i,k}^2 = -30~\text{dB}$, $P_{\text{max}}:=P_i = 1$, $d_i=1$, $\kappa = -30~\text{dB}$ where $\ma{\Theta}_{\text{rx},i} = \kappa \ma{I}_{M_{i}}$ and $\ma{\Theta}_{\text{tx},i} = \kappa \ma{I}_{N_{i}}$, and $\zeta_{ij}^k= -15~\text{dB}$, $\omega_i=1$, $\forall i,j \in \mathbb{I}, \; k \in \mathbb{F}_K$. \par 

\subsection{Algorithm analysis}  \label{sec_AlgorithmAnalysis}

%

Due to the alternating structure, the convergence behavior of the proposed algorithms is of interest, both as a verification for algorithm operation as well as an indication of the algorithm efficiency in terms of the required computational effort. In this part, the performance of AltQCP and AltSDP algorithms are studied in terms of the average convergence behavior and computational complexity. Moreover, the impact of the choice of the algorithm initialization is evaluated. \\
\rev{In Fig.~\ref{fig:init} the average convergence behavior is depicted for different values of $\kappa$~[dB]. In particular, "Min" and "Avg" curves respectively represent the minimum, and the average value of the algorithm objective at the corresponding optimization step over the choice of $20$ random initializations. Moreover, "RSM" represents the right-singular matrix initialization proposed in \cite[Appendix~A]{5585631}. It is observed that the algorithms converge, within $10-30$ optimization iterations, specially as $\kappa$ is small. Although the global optimality of the final solution can not be verified due to the possibility of local solutions, the numerical experiments suggest that the applied RSM initialization shows a better convergence behavior compared to a random initialization. Moreover, it is observed that a higher transceiver inaccuracy results in a slower convergence and a gap with optimality. This is expected, as larger $\kappa$ leads to a more complex problem structure. Note that the algorithm AltQCP shows a smaller value of objective compared to that of AltSDP for any value of $\kappa$, since the impact of CSI error is not considered in the algorithm objective. }  

 
In addition to the algorithm convergence behavior, the required computational complexity is affected by the problem dimension, and the required per-iteration complexity, see Subsection~\ref{alg_complexity}. In Fig.~\ref{fig:Complexity}, the required computation time (CT) is depicted for different number of antennas, as well as different number of subcarriers\footnote{The reported CT is obtained using an Intel Core i$5-3320$M processor with the clock rate of $2.6$ GHz and $8$ GB of random-access memory (RAM). As our software platform we have used MATLAB $2013$a, on a $64$-bit operating system.}. It is observed that the AltSDP results in a significantly higher CT, compared to AltQCP. This is expected as the consideration of CSI error in AltSDP results in a larger problem dimension, and hence higher complexity. Moreover, the obtained closed-form solution expressions in AltQCP result in a more efficient implementation. Nevertheless, the required CT for AltQCP is still higher than the threshold-based low-complexity approaches, see Subsection~\ref{Sim_benchmarks}, due to the expanded problem dimension associated with the impact of \rev{RSI} and \rev{ICL}.    

\subsection{Performance comparison}
In this part we evaluate the performance of the proposed AltSDP and AltQCP algorithms in terms of the resulting worst-case MSE, see Subsection~\ref{WC_CSI}, under various system conditions.  

\subsubsection{Comparison benchmarks} \label{Sim_benchmarks}
In order to facilitate a meaningful comparison, we consider popular approaches for the design of FD single-carrier bidirectional systems, or the available designs for other MC systems with simplified assumptions, see Subsection~\ref{related_works}. The following approaches are hence implemented as our evaluation framework:    
\begin{itemize}[leftmargin=*]
\item \textit{AltSDP}:~The AltSDP algorithm proposed in Section~\ref{sec:WMMSE_CSI_Error}. The impact of the hardware distortions leading to inter-carrier leakage, as well as CSI error are taken into account.
\item \textit{AltQCP}:~The AltQCP algorithm proposed in Section~\ref{sec:WMMSE}. The algorithm operates on the simplified assumption that the CSI error does not exist, i.e., $\zeta=0$, and hence focuses on the impact of hardware distortions.   
\item \textit{HD}:~The AltSDP algorithm is used on an equivalent HD setup, where the communication directions are separated via a time division duplexing (TDD) scheme. 
\rev{\item \textit{$\kappa=0$}:~The impact of CSI error is taken into account similar to, e.g., \cite{ZTH:132, ng2012dynamic}. Nevertheless the impact of hardware distortion, leading to inter-carrier leakage, is ignored.     
\item \textit{SC}:~The optimal single carrier design applied to the defined MC system, following a similar approach as in \cite{DMBS:12, CRYL:15}. The impact of CSI error and hardware distortions are taken into account.}   
\end{itemize}
\rev{Other than the approaches that directly deal with the impact of \rev{RSI}, e.g., \cite{DMBS:12, CRYL:15}, a low complexity design framework is proposed in \cite{ Huberman2014, JTLH:12}, by introducing an interference power threshold, denoted as $P_{\text{th}}$. In this approach, it is assumed that the \rev{SI} signal can be perfectly subtracted, given the \rev{SI} power is kept below $P_{\text{th}}$. In this regard, we evaluate the extended version of \cite{JTLH:12} on the defined MC setup for three values of $P_{\text{th}}$: 
\begin{itemize} [leftmargin=*]
\item \textit{$P_{\text{th}}-\{\infty, \text{High}, \text{Low}\}$}:~representing a design by respectively choosing $P_{\text{th}}=$ $\infty$, $P_i$, $P_i/10$, representing a system with perfect, high, and low dynamic range conditions. 
\end{itemize}}

\subsubsection{Visualization}

In Figs.~\ref{fig:MSE_sumrate}~(a)-(h) the average performance of the defined benchmark algorithms in terms of the worst-case (WC) MSE are depicted. The average sum rate behavior of the system is depicted in Fig.~\ref{fig:MSE_sumrate}~(i)-(l). \\ 

In Fig.~\ref{fig:MSE_sumrate}~(a) the impact of transceiver inaccuracy is depicted on the resulting WC-MSE. It is observed that the estimation accuracy is degraded as $\kappa$ increases. For the low-complexity algorithms, where the impact of hardware distortion is not considered, the resulting MSE goes to infinity as $\kappa$ increases. Nevertheless, the resulting MSE reaches a saturation point for the distortion-aware algorithms, i.e, AltSDP and AltQCP. This is since for the data streams affected with a large distortion intensity, the decoder matrices are set to zero which limits the resulting MSE to the magnitude of the data symbols. Moreover, the AltSDP method outperforms the other performance benchmarks for all values of $\kappa$. It is worth mentioning that the significant gain of an FD system with low $\kappa$ over the HD counterpart, disappears for a larger levels of hardware distortion where AltSDP and HD result in a close performance.    

In Fig.~\ref{fig:MSE_sumrate}~(b) the impact of the CSI error is depicted. It is observed that the estimation MSE increases for a larger value of $\zeta$. For the low-complexity algorithms where the impact of CSI error is not considered, the resulting MSE goes to infinity, as $\zeta$ increases. Nevertheless, the performance of the AltSDP method saturates by choosing zero decoder matrices, following a similar concept as for Fig.~\ref{fig:MSE_sumrate}~(a). It is observed that the performance of the AltSDP and AltQCP methods deviate as $\zeta$ increases, however, they obtain a similar performance for a small $\zeta$. Similar to Fig.~\ref{fig:MSE_sumrate}~(a), a significant gain is observed in comparison to the HD and SC cases, for a system with accurate CSI.  

In Fig.~\ref{fig:MSE_sumrate}~(c) the impact of the thermal noise variance is depicted. It is observed that the resulting performance degrades for the distortion-aware algorithms, as the noise variance increases. Nevertheless, we observe a significant performance degradation for the threshold-based algorithms, particularly $P_{\text{th}}-\text{Low}$, in the low noise regime. This is since the imposed interference power threshold tends to reduce the transmit power, which results in a larger decoder matrices in a low-noise regime. This, in turn, results in an increased impact of distortion. Nevertheless, as the noise variance increases, the algorithm chooses decoding matrices with a smaller norm in order to reduce the impact of noise. This also reduces the impact of hardware distortions. Similar to Fig.~\ref{fig:MSE_sumrate}~(a), the proposed AltSDP method outperforms the other comparison benchmarks. It is observed that the performance degradation caused by ignoring the CSI error in AltQCP, or by applying a simplified single carrier design, is significant particularly for a system with a small noise variance. 

In Fig.~\ref{fig:MSE_sumrate}~(d) the impact of the communication channel strength is observed on the resulting system performance. It is observed that the MSE decreases in most parts as the communication channel becomes stronger. Nevertheless, the system performance saturates, due to the impact of hardware distortion which increase proportional to the transmit/receive power at each chain. Moreover, the performance of the methods with a perfect hardware/CSI assumption saturates at a higher MSE, due to the impact of the ignored effect. Moreover, the algorithms AltQCP and AltSDP result in an approximately similar performance for a system with a high channel strength. This is since for a high $\rho$ regime, the impact of thermal noise and CSI error become less significant. As a result the system performance is dominated by the impact of distortion which is amplified due to the higher channel strength.

 
In Fig.~\ref{fig:MSE_sumrate}~(e) the impact of the number of subcarriers is observed on the resulting MSE. It is observed that a higher number of subcarriers result in a higher error for all benchmark methods. This is expected as a higher number of subcarriers enables a higher number of communication streams, resulting in a lower available per-stream power. The performance of the SC design reaches optimality of a single carrier system, as expected. Nevertheless the performance of the SC scheme deviates from optimality as $K$ increases, and results in the highest MSE in comparison to the evaluated benchmarks, for $K \geq 5$. This is expected, as higher independent subcarriers represent a channel with a higher frequency selectivity which calls for a specialized MC design.   

In Fig.~\ref{fig:MSE_sumrate}~(f) the impact of the number of antennas is observed. As expected, a higher number of antennas results in an increased performance for all of the performance benchmarks. In particular, a higher number of antennas enables the system to better overcome the CSI error, for a fixed $\zeta$, and also to direct the transmit power in the desired channel and not in the self-interference path.

\rev{In Fig.~\ref{fig:MSE_sumrate}~(g) the impact of the accuracy of transmit and receiver chains are studied, where $\kappa \text{[dB]} + \beta \text{[dB]}= \mathcal{K}_{\text{sum}}$, i.e., the sum-accuracy (in dB scale) is fixed. For instance, for an FD transceiver with massive antenna arrays where the utilization of analog cancelers is not feasible, and also the quantization bits are considered as costly resources, the value of $\mathcal{K}_{\text{sum}}$ is related to the total number of quantization bits. The similar evaluation regarding the number of transmit/receive antennas is performed in Fig.~\ref{fig:MSE_sumrate}~(h), where $M_{\text{t}} + M_{\text{r}} =  \mathcal{M}_{\text{sum}}$. It is observed that different available resources, i.e., $\mathcal{K}_{\text{sum}}$, $\mathcal{M}_{\text{sum}}$, result in different optimal allocations. However, as a general insight, it is observed that the performance is degraded when resources are concentrated only on transmit or receive side. Please note that similar approach can be used for evaluating different cost models for accuracy and antenna elements, regarding different system setups, or SIC specifications.}

In Fig.~\ref{fig:MSE_sumrate}~(i)-(l) the average sum rate behavior of the system depicted. In Fig.~\ref{fig:MSE_sumrate}~(i), the impact of hardware inaccuracy is depicted. It is observed that a higher $\kappa$ results in a smaller sum rate. Moreover, the obtained gains via the application of the defined MC design in comparison to the designs with frequency-flat assumption, and via the application of FD setup in comparison to HD setup, are evident for a system with accurate hardware conditions. Conversely, it is observed that a design with consideration of hardware impairments is essential as $\kappa$ increases. In Fig.~\ref{fig:MSE_sumrate}~(j) and (k), the opposite impacts of noise level, and the maximum transmit power are observed on the system sum rate. It is observed that the system sum rate increases as noise level decreases, or as the maximum transmit power increases\footnote{\rev{For the algorithms with a zero-distortion assumption, the maximum allowed transmit power is utilized to reduce the impact of thermal noise. However, this result in an amplified distortion effect, and a reduced performance as SNR increases, also see the MSE peaks in Fig.~\ref{fig:MSE_sumrate}~(c)-(d) for the same algorithms.}}. In both cases, the gain of AltQCP method, in comparison to the methods which ignore the impact of hardware distortions are observed for a high SNR conditions, i.e., for a system with a high transmit power or a low noise level. \rev{In Fig~\ref{fig:MSE_sumrate}~(l) the performance of the asymmetric setup, studied is Section~IV, is depicted, assuming $|\mathbb{I}|=5$, and $d_i=1$. It is observed that the gain of FD system (over the HD counterpart) vanishes rapidly as the hardware distortions increase.}

\vspace{-4mm}
\section {Conclusion} \label{sec:conclusion}
The application of bi-directional FD communication presents a potential for improving the spectral efficiency. Nevertheless, such systems are limited due to the impact of residual self-interference. This issue becomes more crucial in a multi-carrier system, where the residual self-interference spreads over multiple carriers, due to the impact of hardware distortion. In this work we have presented a modeling and design framework for \rev{an FD MIMO} OFDM system, taking into account the impact of hardware distortions leading to inter-carrier leakage, as well as the impact of CSI error.

It is observed that the application of a distortion-aware design is essential, as transceiver accuracy degrades, and inter-carrier leakage becomes a dominant factor. Moreover, a significant gain is observed compared to the usual single-carrier approaches, for a channel with frequency selectivity. However, the aforementioned improvements are obtained at the expense of a higher design computational complexity. 

\vspace{-3mm}
\appendix 
\rev{We start the proof with the characterization of the impact of distortion on the transmit chains. The proof to the receiver characterization is obtained similarly. The statistical independence properties at the frequency domain directly follows from the time domain statistical independence ${e}_{\text{t},l} (t) \bot {v}_{l} (t)$, and~${e}_{\text{t},l} (t) \bot {e}_{\text{t},l^{'}} (t)$, and the linear nature of the transformation in (\ref{FD_P2P_FrequencyDomain_Tx}). The Gaussian and zero-mean properties similarly follow for ${e}_{\text{t},l}^k$ as a linearly weighted sum of the zero-mean Gaussian values ${e}_{\text{t},l}(m T_s)$. The variance of ${e}_{\text{t},l}^k$ can be hence obtained as 
\begin{align} 
& \mathbb{E}\left\{ \left|{e}_{\text{t},l}^k \right|^2  \right\}  = \mathbb{E}\Bigg\{ \frac{1}{K}\left( \sum_{m=0}^{K-1} e_{\text{t},l} (m T_{\text{s}})  e^{-\frac{j2 \pi mk}{K}} \right)   \nonumber \\ & \;\;\;\;\;\;\;\;\;\;\;\;\;\;\;\;\;\;\;\;\;\;\;\;\;\;\;\;\;\;\; \times \left(\sum_{n=0}^{K-1} {e_{\text{t},l}^{*} (n T_{\text{s}})} e^{\frac{j2 \pi nk}{K}}\right)  \Bigg\} \label{appendix_P2P_a} \\
& = \frac{1}{K} \sum_{m=0}^{K-1} \sum_{n=0}^{K-1}  \mathbb{E}\left\{e_{\text{t},l} (m T_{\text{s}})   {e_{\text{t},l}^{*} (n T_{\text{s}}) } \right\} e^{-\frac{j2 \pi (m-n)k}{K}} \label{appendix_P2P_b} \\
& = \kappa_l \mathbb{E}\left\{ \left|{v}_{l} (t) \right|^2  \right\}  \label{appendix_P2P_c}\\
& = \frac{\kappa_l}{K} \sum_{m=1}^{K} \mathbb{E}\left\{ |v_l^m|^2 \right\} \label{appendix_P2P_d}
\end{align}  
where (\ref{appendix_P2P_a}) is obtained via direct application of (\ref{FD_P2P_FrequencyDomain_Tx}), and (\ref{appendix_P2P_c}) is obtained from (\ref{eq_model_distortion_stat_1}), and the statistical independence of $e_{\text{t},l}$ at the subsequent time samples from (\ref{eq_model_distortion_stat_2}). The identity (\ref{appendix_P2P_d}) follows from the Parseval's theorem on the energy conversation over orthonormal Fourier basis.     }

\end{document}